\DeclarePairedDelimiter\floor{\lfloor}{\rfloor}
\newcolumntype{C}{>{\centering\arraybackslash}X} 
\newtheorem{theorem}{Theorem}
\newtheorem{lemma}[theorem]{Lemma}
\newtheorem{definition}{Definition}
\newtheorem{assumption}{Assumption}
\newtheorem{remark}{Remark}
\title{Intelligent Resource Slicing for eMBB and URLLC Coexistence in 5G and Beyond: A Deep Reinforcement Learning Based Approach}
\author{Madyan Alsenwi\thanks{Madyan Alsenwi, S. R. Pandey, and C. S. Hong are with the Department of Computer Science and Engineering, Kyung Hee University, Yongin 17104, South Korea (email: \{malsenwi, shashiraj, anupam, cshong\}@khu.ac.kr).}, Nguyen H. Tran\thanks{ N. H. Tran is with the School of Computer Science, University of Sydney, NSW 2006, Australia (e-mail: nguyen.tran@sydney.edu.au).}, \IEEEmembership{Senior Member, IEEE}, Mehdi Bennis\thanks{M. Bennis is with the Department of Communications Engineering, University of Oulu, FI-90014 Oulu, Finland, and also with the Department of Computer Science and Engineering, Kyung Hee University, Yongin 17104, South Korea (e-mail: mehdi.bennis@oulu.fi)}, \IEEEmembership{Senior Member, IEEE}, Shashi Raj Pandey, Anupam Kumar Bairagi\thanks{A. K. Bairagi is with the department of Computer Science and Engineering, Khulna University, Khulna 9208, Bangladesh, and also with the Department of Computer Science and Engineering, Kyung Hee University, Yongin 17104, South Korea (e-mail: anupam@khu.ac.kr).},  \IEEEmembership{Member, IEEE},  and Choong Seon Hong, \IEEEmembership{Senior Member, IEEE}} 
\begin{document}

	\maketitle
	\begin{abstract}
	In this paper, we study the resource slicing problem in a dynamic multiplexing scenario of two distinct 5G services, namely Ultra-Reliable Low Latency Communications (URLLC) and enhanced Mobile BroadBand (eMBB). While eMBB services focus on high data rates, URLLC is very strict in terms of latency and reliability. In view of this, the resource slicing problem is formulated as an optimization problem that aims at maximizing the eMBB data rate subject to a URLLC reliability constraint, while considering the variance of the eMBB data rate to reduce the impact of immediately scheduled URLLC traffic on the eMBB reliability. To solve the formulated problem, an optimization-aided Deep Reinforcement Learning (DRL) based framework is proposed, including: \emph{1) eMBB resource allocation phase}, and \emph{2) URLLC scheduling phase}. In the first phase, the optimization problem is decomposed into three subproblems and then each subproblem is transformed into a convex form to obtain an approximate resource allocation solution. In the second phase, a DRL-based algorithm is proposed to intelligently distribute the incoming URLLC traffic among eMBB users. Simulation results show that our proposed approach can satisfy the stringent URLLC reliability while keeping the eMBB reliability higher than $90\%$.   
	\end{abstract}
	\begin{IEEEkeywords}
		5G NR, resource slicing, eMBB, URLLC, risk-sensitive, deep reinforcement learning. 
	\end{IEEEkeywords}
	\section{Introduction}
	\lettrine{T}{he} services supported by the 5th Generation (5G) New Radio (NR) fall under three categories, i.e., enhanced Mobile Broad Band (eMBB), massive Machine-Type Communications (mMTC), and Ultra-Reliable Low-Latency Communications (URLLC). eMBB is designed to accommodate high data rate applications such as 4K video and Virtual Reality (VR). Specifically, eMBB service can be considered as an extension of LTE-Advanced broadband service which allows higher data rate and coding over large transmission blocks for a long time interval. Therefore, the objective of eMBB service is to achieve high data rate while satisfying a moderate reliability with packet error rate (PER) of $10^{-3}$ \cite{Bennis2018, Popovski_Access}. On the contrary, mMTC aims at serving a large number of Internet of Things (IoT) devices sending data sporadically with a low and fixed uplink transmission rate. A large number of mMTC devices may connect to a Base Station (BS) making it infeasible to allocate a priori resource to each device. Generally, mMTC devices, such as sensing,  metering, and monitoring, focus on energy-efficiency \cite{Park_Wireless_Communications}.
	
	
	Meanwhile, URLLC services target mission critical communications such as autonomous vehicles, tactile internet, or remote surgery. In general, URLLC transmissions are sporadic with a short packet size and with relatively low data rate. The main requirements of URLLC transmission are ultra-high reliability with  a PER around $10^{-5}$ and low latency. Due to its low latency requirement, URLLC transmissions are localized in time with short Transmission Time Intervals (sTTI). In 4G systems, the control signaling takes a large portion of the transmission latency, i.e., $0.3-0.4$ ms. Thus, designing a short packet transmission system with latency of $0.5$ ms may cause wasting of more than $60\%$ of resources for control overheads. To this end, many changes on the physical layer design have been introduced in 5G NR systems in order to support URLLC services \cite{Popovski_Access,Park_Wireless_Communications}.     
	\subsection{Physical layer enablers for URLLC in 5G NR}
	We discuss the 5G NR to support both defined services, i.e., eMMB, and URLLC. Generally, 5G NR supports multiple waveform configurations (numerology) and thus radio frame gets different shapes. The sub-carrier spacing of the low band outdoor macro networks is $15$ kHz while it is $30$ kHz in outdoor small cell networks. However, the higher frequency bands come with higher sub-carrier spacing, i.e., the sub-carrier spaces of $60$ kHz and $120$ kHz are chosen for the $5$ GHz unlicensed bands and the $28$ GHz mmWave bands, respectively. \cite{3gpp_release15}. In time domain, the length of a radio frame and a sub-frame are always, regardless of numerology, $10$ ms and $1$ ms, respectively. The difference is the number of time slots within a sub-frame and the number of symbols within a time slot\footnote{The number of symbols is fixed for all numerology and it only changes slot configuration type, i.e., for the slot configuration ``0", the number of symbols for a time slot is always $14$ while it is $7$ for slot configuration ``1".}. Hence, a Resource Block\footnote{A RB is defines as a group of OFDM sub-carriers for a time slot duration which is the smallest frequency-time unit that can be assigned to a user.} (RB) has different structures depending on the numerology.
	
	To support low latency transmission of URLLC, one option is to reduce the symbol period by controlling the sub-carrier spacing, i.e., the symbol length can be reduced to half by doubling the sub-carrier spacing. This is relevant in mmWave bands (above $6$ GHz) as the cell radius is smaller due to the path loss inducing smaller channel delay spread compared to the conventional cellular systems. However, this approach cannot be applied to bands lower than $6$ GHz due to the large delay spread. Another option is to reduce the number of symbols in the packet TTI, i.e., using mini-slot (short TTI) level of 2-3 symbols and slot level (e.g. 7 symbols) transmissions. In summary, we can achieve a TTI smaller than $1$ ms by adjusting the number of symbols and the symbol period. Going further, to bring in more efficiency and reduce latency, a concept called Code Block Groups (CBGs) based transmission is proposed in 5G NR which divides the large transport block into smaller Code Blocks (CBs). Furthermore, the smaller CBs are further grouped into CBGs. Here, users decode CBGs and send feedback (ACK/NACK) for each individual group.
	
	We exploit the aforementioned facts to design an efficient mechanism to tackle the coexistence problem of eMBB and URLLC services. In particular, we leverage the frame structure flexibility of 5G NR to design a resource allocation framework to satisfy the specific  requirements of each service.        
	\subsection{Motivation}
	The coexistence of these heterogeneous services with distinct requirements mandates an efficient resource slicing framework that can satisfy the requirements of each service. Specifically, the incoming URLLC packets during the scheduling period of eMBB transmissions cannot be delayed due to its strict latency requirement. To this end, two approaches have been adopted in the third Generation Partnership Project (3GPP) standard \cite{Popovski_Access, 3gpp_release15}:
	\begin{itemize}
		\item \textbf{Preemptive (Puncturing) scheduling:} URLLC traffic will be scheduled in short TTIs on top of the ongoing eMBB transmissions. In other words, the Next Generation NodeB (gNB) stops eMBB transmission during short TTIs of URLLC transmission to ensure URLLC latency. This protocol is efficient in terms of reducing the URLLC latency, however, it may impact eMBB transmission reliability. Therefore, a coexistence mechanism is required to reduce the performance degradation of the ongoing eMBB transmissions.
		\item \textbf{Orthogonal scheduling:} A number of frequency channels are reserved in advance for URLLC traffic in this approach. There are two reservation mechanisms: semi-static reservation and dynamic reservation. In the semi-static scheme, the gNB intermittently broadcasts the frame structure configurations such as frequency numerology. However, in the dynamic reservation, the frame structure information is updated frequently using the control channel of a scheduled user. The downside of this approach is that resources reserved for URLLC will be wasted in case of there is no URLLC transmission. Furthermore, the dynamic scheme needs additional control overhead compared to the semi-static scheme.
	\end{itemize}
	
In this work, we consider the preemptive (puncturing) scheduling approach to handle the dynamic multiplexing\footnote{URLLC traffic is overlapped on eMBB traffic at every mini-slot, which is referred to \emph{dynamic multiplexing} in 3GPP \cite{Park_Wireless_Communications,3gpp_release15}.} of eMBB/URLLC traffic. Under this scenario, the immediate scheduling of URLLC traffic by halting the ongoing eMBB transmission will impact both the throughput and reliability of eMBB transmissions. Therefore, it is imperative to study the coexistence problem of eMBB and URLLC services in 5G NR, where an optimization-based resource allocation problem should not just aim at maximizing the average data rate of eMBB users, but also should consider both eMBB and URLLC reliability. 
	\subsection{Challenges and Contributions}
	The coexistence of URLLC and eMBB services
on the same radio resource leads to a challenging scheduling problem which is not straightforward to tackle due to the underlying trade-off among latency, reliability and spectral efficiency. In this work, we overcome this problem considering the puncturing scheduling approach. In particular, we develop an efficient mechanism to ensure eMBB data rate and reliability while guaranteeing URLLC stringent requirements. In doing so, we not only focus on the eMBB throughput maximization problem with the URLLC constraints, like the existing works, but also incorporate the risk\footnote{In this work, risk is defined as a measure to quantify the impact of puncturing process on eMBB users data rate when serving the random URLLC traffic.} associated with eMBB transmissions while serving random URLLC requests during the ongoing transmission period in the optimization problem. Specifically, we exploit the variance of eMBB data rate to
	characterize the associated risk and reliability of eMBB transmissions due to the coexistence problem.
	
	Furthermore, most of the existing works in eMBB-URLLC coexistence problem adopt standalone optimization which has its own limitations. First, these techniques are inefficient to well-capture the dynamic characterstics of URLLC traffic and wireless channel conditions. Secondly, due to the complexity of formulated optimization problem, a naive relaxation approach to get the optimal solution may violate the URLLC reliability constraints at the worst case scenario, making these techniques detrimental to network performance.
	
	In  practice,  URLLC  traffic  is  random  and  sporadic;  thus,  it  is  necessary  to  dynamically  and intelligently allocate resources to the URLLC traffic by interacting with the environment. In this regard, DRL approaches can solve non-deterministic problems and make decisions in real-time; thus, making it an appropriate choice to work along with the optimization problem in addressing resource allocation issues and decision making under uncertainty. However, applying DRL in URLLC is not straightforward because the general constraints of quality-of-service (QoS) requirements, such as in the optimization problems, are not present in DRL. Therefore, we have carefully designed reward function based on the traffic scheduling policies that take into account the requirements of eMBB and URLLC services. 
	
	Similarly, DRL may suffer from slow convergence, making it limited to be adopted as a sole solution for handling the aforementioned issues. Thus, in this work, we propose a novel holistic approach combining optimization theory based methods with DRL technique to improve the performance of resource allocation in a dynamic multiplexing scenario of eMBB-URLLC traffic. Specifically, our main contributions are: 
	\begin{itemize}
		\item \textbf{We formulate the resource slicing problem as an optimization problem that maximizes the average data rate of eMBB users, minimizes the variance of eMBB users' data rate, while satisfying the URLLC constraints.} Here, minimizing the variance of eMBB rate reduces the risk on eMBB transmissions, thereby enhancing  its reliability. Furthermore, to ensure a high URLLC reliability, the corresponding reliability constraint is cast as a chance constraint which effectively captures the risk-tail distribution of the outage probability. 
		
		\item \textbf{We propose a two-phase-framework, including eMBB resource allocation and URLLC scheduling phases, that copes with  the dynamic URLLC traffic and channel variations.} In particular, RBs and transmission power are allocated to eMBB users at the eMBB reource allocation phase. Due to the dynamic nature of both URLLC traffic, and channel variations and in order to ensure the reliability requirement of URLLC service, we propose a DRL-based algorithm to schedule the URLLC transmissions over the ongoing eMBB transmissions in the URLLC scheduling phase.   
		
		\item \textbf{In the eMBB resource allocation phase, we first reformulate the optimization problem using the exponential utility function capturing both mean and variance of the eMBB data rate. Then, a Decomposition and Relaxation based Resource Allocation (DRRA) algorithm is proposed.} The proposed DRRA algorithm decomposes the optimization problem into three subproblems: 1) eMBB RBs allocation, 2) eMBB power allocation, and 3) URLLC scheduling. Then each problem is solved individually based on its structure in order to achieve a practical solution with low computation complexity. Specifically, the RBs allocation and power allocation problems are relaxed into convex optimization problems. However, the URLLC resource allocation problem is combinatorial in nature for which it is difficult to achieve a closed-form solution. Hence, we replace the integer variable in the URLLC scheduling problem, i.e., the number of punctured short TTIs (mini-slots), by a continuous weighting variable for each RB. Later, we calculate the number of punctured mini-slots from each RB by modeling it as a binomial distribution with parameters \emph{puncturing weight} and \emph{number of mini-slots in each time slot}.
		
		\item \textbf{In the URLLC scheduling phase, a DRL based algorithm is proposed to cope with URLLC reliability violations, caused due to the relaxation techniques applied in the eMBB resource allocation phase, and to smartly distribute the URLLC traffic on the eMBB users by tackling the dynamics of URLLC traffic and channel variations.} To handle the slow convergence issue of the DRL, we propose a policy gradient based actor-critic learning (PGACL) algorithm that can learn policies by combining the policy learning and value learning with a good convergence rate. Moreover, at the initial start, we leverage the URLLC scheduling results obtained by the DRRA algorithm in the eMBB resource allocation phase to train the PGACL algorithm and improve its convergence time. Hence, combining the advantages of the DRRA and PGACL algorithms (DRRA-PGACL) provides a reliable and efficient resource allocation approach. 
		\item \textbf{The computation complexity of the proposed algorithm is studied in terms of convergence time and accuracy.} Furthermore, extensive simulations are performed to validate our proposed algorithms. Simulation results show that the proposed algorithms can satisfy the stringent URLLC reliability while keeping the eMBB reliability higher than $90\%$. 
	\end{itemize}   
	\subsection{Organization}
	Section II summarizes the related works in the following subsections: A. URLL requirements and design, B. Coexistence of eMBB URLLC services, and C. DRL in wireless networks. We present the system model and problem formulation in Section III. Specifically, we introduce the impact on the data rate of eMBB users, the URLLC data rate, chance constraints of URLLC requirements, and the final problem formulation. In Section IV, we present the proposed eMBB resource allocation algorithm. A DRL based resource slicing framework is presented in Section V. We evaluate the performance of the proposed algorithms in Section VI. Section VII concludes the paper.  
	\section{Related works}
	\subsection{URLLC requirements and design}
	Research works focusing on URLLC are gaining attention in both academia and industry. For example, the work in \cite{Park_Wireless_Communications} highlighted the key requirements of URLLC and its physical layer issues. The authors presented enabling technologies for URLLC  in 5G NR such as packet structure, frame structure, and scheduling schemes discussed in 3GPP Release 15 standardization. In \cite{Popovski2019}, the authors discussed communication-theoretic principles for the design of URLLC including the medium access control (MAC) protocols, massive MIMO, interface-diversity, and multi-connectivity. The authors of \cite{Park2020} discussed the limitations of 5G URLLC and provided key research directions for the next generation of URLLC, named eXtreme URLLC (xURLLC). The authors proposed three concepts for the xURLLC: 1) Predicting channels, traffic, and other key performance indicators by leveraging the machine learning technology; 2) Fusing both radio frequency and non-radio frequency modalities for predicting rare events; and 3) Joint communication and control co-design. The study conducted in \cite{Sun2019} discussed the resource allocation for URLLC problem considering the achievable rate in the short block-length regime. The resource allocation problem is to optimize the bandwidth allocation, power control, and antenna configuring considering both latency and reliability constraints. The work in \cite{Liu2018} studied the power minimization subject to latency and reliability constraints considering a Manhattan mobility model in Vehicle-to-Vehicle (V2V) networks. The reliability measure is defined in terms of maximal queue length among all vehicle pairs and the extreme value theory is applied to locally characterize the maximal queue length. In \cite{Mei2018}, the authors studied the joint optimization of radio resources, power control, and modulation schemes of the V2V communications while guarantying the latency and reliability requirements of vehicular users and maximizing the rate of cellular users. They used Lagrange dual decomposition and binary search methods to find the optimal solution of the joint optimization problem. 
	\subsection{Coexistence of eMBB URLLC services} 
	The authors in \cite{JSAC_2019} explored eMBB and URLLC services in cloud radio access networks. A multi-cast transmission is considered for eMBB slices while URLLC slices are relied on uni-cast transmission. They proposed a generic revenue framework for radio access network slicing and formulated the revenue maximization problem as a mixed-integer nonlinear programming. Semi-definite relaxation is leveraged to solve the optimization problem. In \cite{Anand_Joint}, the authors studied the impact of URLLC traffic on eMBB transmissions modeling the loss of eMBB data rate  associated with URLLC traffic as a linear, convex, or threshold model. The work in \cite{Jihong_VR}  studied the problem of concurrent support of visual and haptic perceptions over wireless cellular networks. The visual traffic is linked to eMBB slices while  the haptic traffic is linked to URLLC slices leading to eMBB-URLLC  multi-modal transmissions. The authors in \cite{Bairagi2020} proposed the PSUM algorithm and transportation model to solve resource scheduling problem for eMBB and URLLC users over time slots and mini-slots, respectively. 
	
	Moreover, the study in \cite{Kassab_Globecom} discussed the performance trade-offs between orthogonal and non-orthogonal multiple access for multiplexing of eMBB and URLLC users in the uplink of a multi-cell cloud radio access networks architecture. The analysis includes orthogonal and non-orthogonal multiple access with different decoding architectures, such as successive interference cancellation and puncturing. \emph{The results show that the orthogonal multiple access approach reduces the eMBB-URLLC mutual interference;  however, URLLC users suffer from the errors caused by packet drops due to the insufficient number of transmission opportunities}. Moreover, the results show significant gains accrued by the successive interference cancellation scheme of URLLC traffic at the edge for non-orthogonal multiple access. Furthermore, the work shows the potential benefits of puncturing in improving the efficiency of fronthaul usage by discarding received mini-slots (short TTIs) affected by URLLC interference. The work in \cite{Popovski_Access} proposed a communication-theoretic model for eMBB, mMTC, and URLLC services considering traffic dynamics that are inherent to each individual service. The authors analyzed the performance of both orthogonal and non-orthogonal slicing. The study demonstrated that the non-orthogonal slicing scheme can ensure performance level for all services by leveraging their heterogeneous requirements. The authors in \cite{Bairagi2019} used a heuristic algorithm and one-sided matching game to solve the coexistence problem of eMBB and URLLC traffics. In \cite{Pandey2019}, the authors tried to maximize the data rate of eMBB users while maintaining the reliability requirement of URLLC via solving a multi-armed bandit problem. In our previous work \cite{Alsenwi2019}, we proposed a risk-sensitive formulation based on the Conditional Value at Risk (CVaR) as a risk measure for eMBB reliability and a chance constraint to encode the reliability constraint of URLLC. 
	\subsection{DRL in wireless networks} 
	Recently, many works have used the DRL to solve the resource allocation problem and decision making in wireless networks \cite{Fu2018}. The study in \cite{Yang2019} proposed an actor-critic RL model for joint communication mode selection, Resource Block (RB) allocation, and power allocation in device-to-device-enabled V2V based internet of vehicle communication networks. Their objective was to satisfy URLLC requirements of V2V links while maximizing the rate of vehicle-to-infrastructure links. In \cite{Yang2019a}, the authors presented a heterogeneous radio frequency/visible light communication industrial network architecture. They formulated a joint uplink and downlink resource management decision-making problem as a Markov decision process. The work in \cite{Kasgari2019} presented a deep RL model to provide URLLC in the downlink of an orthogonal frequency division multiple access network. The problem is formulated as a power minimization problem with rate, latency, and reliability constraints. The rate of each user is calculated and mapped to the RB and power allocation vectors in order to solve the problem using deep RL algorithm. The latency and reliability of each user are  used as a feedback to the deep RL algorithm. In \cite{DRL_eMBB_URLLC_1} and \cite{DRL_eMBB_URLLC_2}, the authors proposed a DRL based algorithms to solve the coexistence problem of eMBB and URLLC. A deep deterministic policy gradient based method is used in  \cite{DRL_eMBB_URLLC_1} while the deep Q-learning algorithm is leveraged in  \cite{DRL_eMBB_URLLC_2}.
	
	Unlike related works, we not only focus on the eMBB throughput maximization problem with the URLLC constraints, but also incorporate the risk associated with eMBB transmissions while serving random URLLC request during the ongoing transmission period in the optimization problem. In particular, we exploit the variance of eMBB data rate to characterize the associated risk and reliability of eMBB transmissions due to the coexistence problem. Furthermore, we propose a holistic approach combining optimization theory based methods with DRL technique to improve the performance of resource allocation in a dynamic multiplexing scenario of eMBB-URLLC traffic.  Therefore, this work is, to the best of our knowledge, the first to analyze the mean-variance aspects of eMBB-URLLC coexistence problem by combining optimization theory based methods with the DRL. 
	\section{System Model and Problem Formulation}
		\begin{figure}[t]
		\centering
		\includegraphics[width=0.5\linewidth]{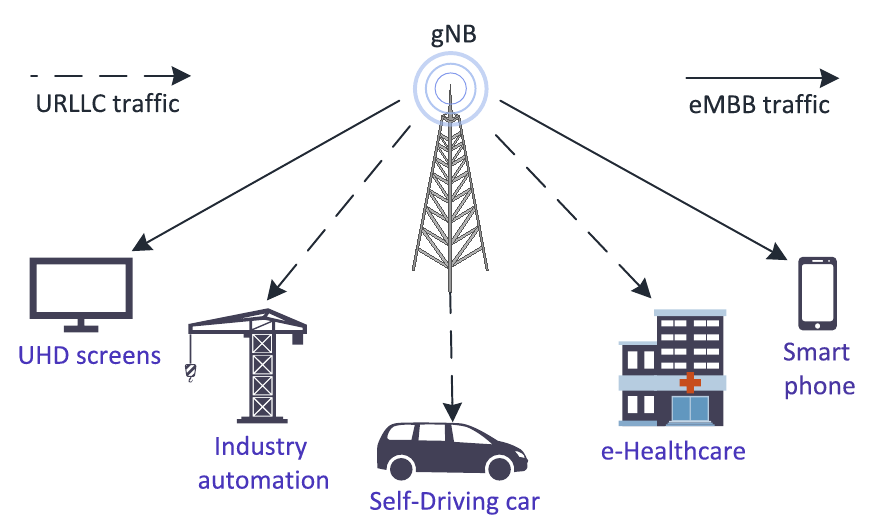}
		\caption{System model.}
		\label{system_model}
	\end{figure}
	\begin{table}[h!]
		\caption{Summary of Notations}
		\centering
		\fboxrule=0pt
		\begin{tabular}{ll}
			\toprule 
			Notation & Definition \\
			\midrule
			$\mathcal{K}, \mathcal{N}, \mathcal{B}$ & Set of eMBB users, URLLC users, and RBs, respectively \\ 
			$x_{kb}(t), p_{kb}(t), z_{kb}(t)$ & RBs allocation, Power allocation, and Puncturing variables, respectively for $k\in\mathcal{K}$, and $b\in\mathcal{B}$\\
			$r_k^e(t)$ & Data rate of eMBB user $k$ at time slot $t$\\ 
			$r_{n}^u(t)$ & Data rate of URLLC user $n$ at time slot $t$\\
			$h_{kb}^e(t)$ & eMBB channel gain,  for $k\in\mathcal{K}$, and $b\in\mathcal{B}$\\
			$h_{nb}^u(t)$ & URLLC channel gain,  for $n\in\mathcal{N}$, and $b\in\mathcal{B}$\\
			$p_{nb}^u(t)$ & URLLC transmission power, for $n\in\mathcal{N}$, and $b\in\mathcal{B}$ \\
			$L(t)$ & Total number of URLLC packets at a time slot $t$\\
			$c_{nb}^u(t)$ & Length of CB in symbol, for $n\in\mathcal{N}$, and $b\in\mathcal{B}$\\ 
			$D_{nb}^u(t)$ & Channel dispersion at time slot $t$ \\
			$P_{\textrm{max}}$ & Maximum transmission power \\
			$M$ & Number of mini-slots in an eMBB time slot \\
			$\mu$ & Parameter controls the desired-risk sensitivity of $\tilde{g}_k$\\
			$\Theta^{\mathsf{max}}$ & Maximum allowed outage probability of URLLC traffic\\ 
			$\zeta$ & URLLC packet size\\
			$f_b$& Bandwidth of RB $b$\\
			$\sigma^2$ & Noise power\\
			$\alpha, \beta$ & Weighting parameters \\
			$\mathcal{A}, \mathcal{S}$ & Set of action space and state space, respectively\\
			$R(\boldsymbol{a, s})$ & Reward function, for $a(t)\in\mathcal{A}$, and  $s(t)\in\mathcal{S}$ \\
			$\phi(t)$& Time-varying weights for URLLC reliability\\
			$\pi$& Puncturing policy\\
			$Q^{\pi}(\boldsymbol{a,s})$ & Cumulative discounted reward at a given $\pi$\\
			$J(\pi)$ & Network objective reward value\\
			$V(\boldsymbol{a, s})$ &Value function of the agent $k$ \\
			$\rho_a, \rho_c$ & Actor and critic learning rate, respectively\\
			\bottomrule  
		\end{tabular}
	\end{table} 
	We consider two types of downlink requests, i.e., URLLC slice and eMBB slice requests. As shown in Fig. \ref{system_model}, there are different types of users connected to a gNB such as self-driving cars, smartphones, industrial automation, etc. We consider a gNB serving a set $\mathcal{K}$ of $K$ eMBB users and a set $\mathcal{N}$ of $N$ URLLC users. Let $B$ denote the total number of RBs, where a RB $b\in\mathcal{B}=\{1, 2, \ldots, B\}$ occupies $12$ sub-carrier in frequency. The summary of notations used in this work is presented in Table 1.
	
	Typically, eMBB transmissions are allowed to span multiple time resources in order to increase spectrum efficiency. However, URLLC transmissions are localized in time domain and can span multiple frequency channels due to its latency requirements. Moreover, the arriving URLLC traffic during the eMBB transmission cannot be delayed until completing eMBB transmissions due to its hard latency constraints. Thus, we schedule URLLC traffic and transmit it immediately by puncturing the ongoing eMBB transmissions. In reality, puncturing (preemption) is done by the gNB scheduler\footnote{For multiplexing between eMBB and URLLC traffics, 3GPP release 15 proposes \emph{the preemption indication (PI)} \cite{3gpp_release15}.}. In this work, we consider that URLLC users are scheduled with short TTI (mini-slot), while eMBB users are scheduled with long TTI size (slot of $1$ ms duration) \cite{Park_Wireless_Communications}. Fig.\ref{multiplexing_eMBB_URLLC} shows the ongoing eMBB transmission with a long TTI duration, where the incoming URLLC packet preempts a part of the eMBB transmissions. As shown in Fig. \ref{multiplexing_eMBB_URLLC}, the transport block of eMBB user 2 consists of seven code-blocks, where each code-block is mapped sequentially to the scheduled time-frequency resources. When the URLLC service is initiated in the second and sixth cod-blocks of the transport block of eMBB user 2, the symbols in these code-blocks are replaced by the symbols of URLLC packets which degrades the quality of the eMBB service. This problem is a serious concern to eMBB service, thus a proper mechanism to protect the ongoing eMBB transmissions should be introduced.
		\begin{figure*}[t!]
		\centering
		\includegraphics[width=0.75\linewidth]{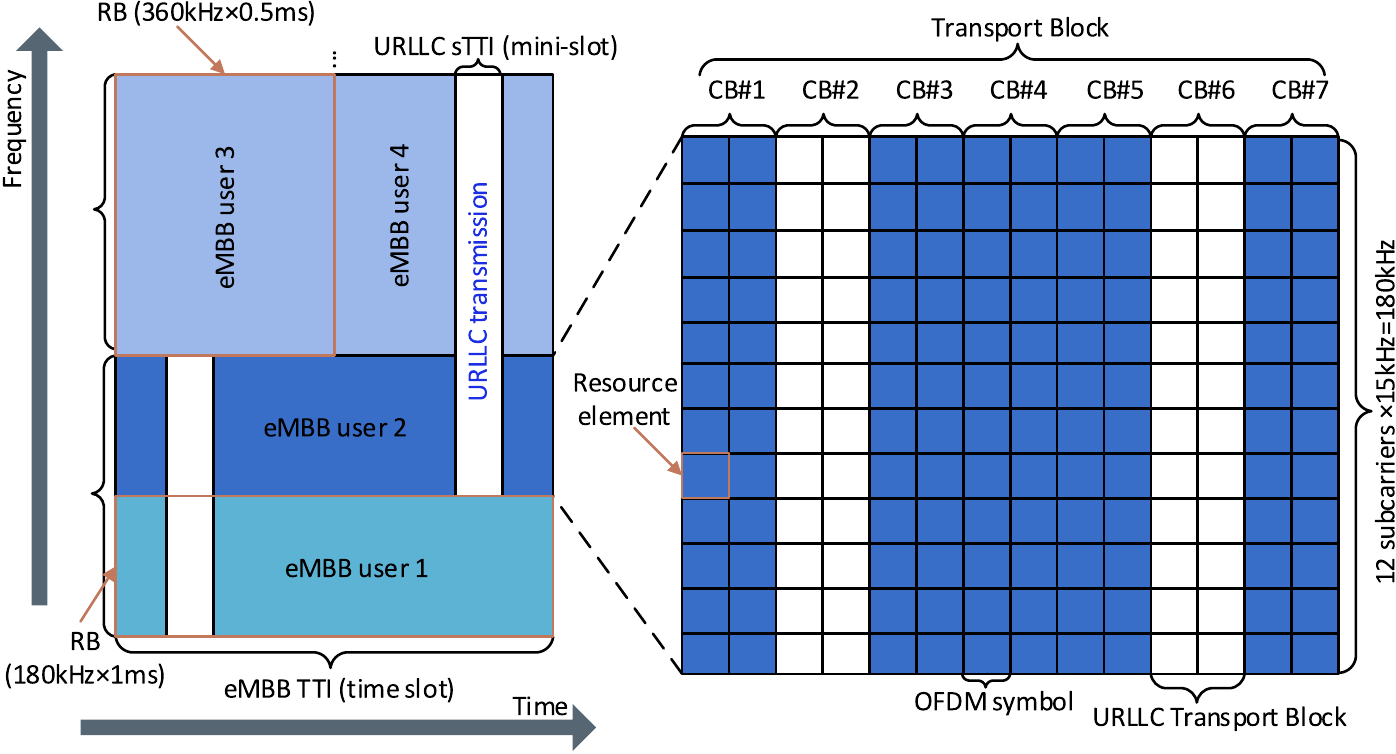}
		\caption{Multiplexing of eMBB/URLLC traffics.}
		\label{multiplexing_eMBB_URLLC}
	\end{figure*}
	\subsection{eMBB data rate based on Shannon capacity model}
	Puncturing eMBB transmissions by URLLC traffic impacts the data rate of eMBB users. Let $z_{kb}(t)$ be the number of punctured mini-slots from the RB $b$ of eMBB user $k$ at time slot $t$. Accordingly, the data rate of an eMBB user $k$ over a RB $b$ at time slot $t$ can be approximated as 
	\begin{equation}
	r_{kb}^e(t)=f_b\Big(1-\frac{z_{kb}(t)}{M}\Big)\log_2\left(1+\frac{p_{kb}(t)h_{kb}(t)}{\sigma^2}\right),
	\end{equation}
	where $f_b$ is the bandwidth of the RB $b$, $M$ is the number of mini-slots in an eMBB time slot, $h_{kb}(t)$ is the time-varying Rayleigh fading channel gain of the transmission, and $p_{kb}(t)$ is the downlink transmission power of the gNB on the RB $b$ to user $k$ at slot $t$. Therefore, the data rate of the eMBB user $k$ over all allocated RBs can be given as
	\begin{equation}
	r_{k}^e(t)=\sum\limits_{b\in\mathcal{B}}x_{kb}(t)r_{kb}^e(t),
	\end{equation}
	where $x_{kb}(t)$ is the eMBB user scheduling indicator at time slot $t$ defined as follows:
	\begin{equation}
	x_{kb}(t)=\begin{cases} 1, &\mbox{if the RB} \;b \;\mbox{is allocated to user}\; k \mbox{ at time }t,\\
	0, & \mbox{otherwise. }
	\end{cases}
	\label{RB_Allocaiton}
	\end{equation}
	\subsection{URLLC data rate based on finite block-length coding}
	In URLLC, packets are typically very short, and thus, the achievable rate and the transmission error probability cannot be accurately captured by Shannon's capacity. Instead, the achievable rate in URLLC falls in the finite block-length channel coding regime, which is derived in \cite{finite_block_lenght}. Let $r_{n}^{u}(t)$ be the achievable rate of URLLC user $n$ at time slot $t$ and $c_{nb}^u(t)$ be the length of the CB in symbols (i.e., number of symbols in a mini-slot). We consider that the Frequency Division Duplex (FDD) is applied inside the URLLC resources. Thus, the URLLC data rate can be given by \cite{finite_block_lenght}:
	\begin{equation}
	r_{n}^u(t)=\sum\limits_{k\in\mathcal{K}}\sum\limits_{b\in\mathcal{B}}\frac{f_bx_k^b(t)z_{kb}(t)}{M\times N}\log\left(1+\frac{p_{nb}^u(t)h_{nb}^u(t)}{\sigma^2}\right)
	-\sqrt\frac{D_{nb}^u(t)}{c_{nb}^u(t)}Q^{-1}(\vartheta),
	\label{URLLC_data_rate}
	\end{equation}
	where $Q^{-1}(\cdot)$ is the inverse of the Gaussian Q-function, $\vartheta>0$ is the transmission error probability, and $D_{nb}^u(t)$ represents the characteristic of the channel called the \textit{channel dispersion}, i.e., $D_{nb}^u(t)$ determines the stochastic variability of the channel of user $n$ at time sot $t$ relative to a deterministic channel with the same capacity, given by 
	\begin{equation}
	D_{nb}^u(t)=1-\frac{1}{\left(1+\frac{p_{n}^u(t)h_{n}^u(t)}{\sigma^2}\right)^2}. 
	\end{equation}
	\subsection{Problem formulation}
	We allocate RBs and transmission power to eMBB users at the beginning of each eMBB time slot. Then, we schedule the incoming URLLC traffic on the ongoing eMBB transmissions by puncturing some resources from eMBB users. Generally, puncturing eMBB users with low data rate (users located at bad channel conditions like at the cell edge) causes high degradation on eMBB reliability\footnote{In this work, eMBB reliability is defined based on the minimum data rate satisfaction of eMBB users, i.e., the percentage of eMBB users that get the minimum required data rate out of the total number of eMBB users.} which should be considered when designing a reliable resource allocation framework. Thus, the proposed resource allocation strategy aims at: \emph{1) maximizing the average eMBB data rate, 2) reducing the impact on eMBB reliability, and 3) satisfying the URLLC constraints}. Due to the uncertainty in wireless channels, we propose a risk-averse formulation by considering the variance of eMBB data rate, in addition to the average eMBB data rate, so as to satisfy the minimum data rate of each eMBB user and enhance its reliability. In this regard, moving from the conventional average based formulation to the risk-averse formulation will reduce the impact on the eMBB reliability that comes from the variations in the wireless channel quality and URLLC scheduling. Analogous to risk-averse formulations in modern portfolio theory (MPT) \cite{markowitz2000mean}, in a dynamic multiplexing scenario of eMBB-URLLC traffics, the gNB needs to construct an appropriate puncturing preferences (\textit{similar to investment strategy}) on eMBB users when serving the incoming URLLC traffic. Therefore, we define a function that captures both the average sum of eMBB data rate and its variance as 
	\begin{equation}
	\mathcal{F}(\boldsymbol{x}, \boldsymbol{p}, \boldsymbol{z})=\sum\limits_{k=1}^{K}\mathbb{E}_{\text{h}}\bigg[\frac{1}{T}\sum\limits_{t=0}^{T}r_{k}^e(t)\bigg]-\beta \textsf{Var}_{\text{h}}\big[r_k^e(t)\big],  
	\label{URLLC_objjective_function}
	\end{equation}
	where $\mathbb{E}$ refers to the expectation, $\textsf{Var}$ refers to the variance, and $\beta$ is the variance weight. The variance part captures the dynamic characteristics of wireless channels to define the reliability of eMBB, as it efficiently characterizes the risk of investments in MPT. 

	On the other hand, the URLLC reliability can be achieved by ensuring that its outage probability is less than a threshold $\Theta^{\mathsf{max}}$, where $\Theta^{\mathsf{max}}$ is a small positive value ($\Theta^{\mathsf{max}}<<1$). Let $L_m(t)$ be a random variable denoting the number of arrived URLLC packets at a minislot $m\in\mathcal{M}=\{1, 2, ..., M\}$ of the time slot $t$, and $L(t)=\sum_{m\in\mathcal{M}}L_m(t)$ is the total number of arrived URLLC packets in the time slot $t$. Then, the URLLC reliability constraint can be defined as
	\begin{equation}
	\textsf{Pr}\left[\sum\limits_{n\in\mathcal{N}}r_n^u(t)\leq \zeta L(t)\right]\leq\Theta^{\mathsf{max}},
	\end{equation}
	where $\zeta$ is the URLLC packet size. 
	
	Accordingly, the joint eMBB/URLLC resource allocation problem can be formulated as follows:
	\begin{maxi!}[2]                 
		{\boldsymbol{x, p, z}}                               
		{\sum\limits_{k=1}^{K}\mathbb{E}_{\text{h}}\bigg[\frac{1}{T}\sum\limits_{t=0}^{T}r_{k}^e(t)\bigg]-\beta \textsf{Var}_{\text{h}}\big[r_k^e(t)\big]}{\label{main_problem}}{}   
		\addConstraint{\textsf{Pr}\bigg[\sum\limits_{n=1}^{N}r_n^u(t)\leq \zeta L(t)\bigg]}{\leq\Theta^{\mathsf{max}}, \label{URLLC_rel_const_main}}
		\addConstraint{\sum\limits_{k=1}^{K}\sum\limits_{b=1}^{B}p_{kb}(t)}{\leq P_{\textrm{max}}, \label{main_const_1}}
		\addConstraint{\sum\limits_{k=1}^{K}x_{kb}(t)}{\leq 1,\;\; \forall \; b\in \mathcal{B}, \label{main_const_2}}
		\addConstraint{p_{kb}(t)}{\geq 0,\; \forall k\in\mathcal{K},\; b\in\mathcal{B}, \label{main_const_3}}
		\addConstraint{x_{kb}(t)}{\in \{0, 1\}, \;\; \forall k\in \mathcal{K},\;\; b \in \mathcal{B},\label{main_const_4}}
		\addConstraint{z_{kb}(t)}{\in\{0, 1, \dots, M\}, \;\; \forall k \in \mathcal{K},\; b\in\mathcal{B}, \label{main_const_5} }
	\end{maxi!}
	where $P_{\textrm{max}}$ is the maximum transmission power of the gNB. The optimization problem (\ref{main_problem}) seeks the optimum RBs allocation matrix to eMBB users $\boldsymbol{x^*}$, the optimum power allocation vector to eMBB users $\boldsymbol{p^{*}}$, and the optimum number of punctured mini-slots of all RBs matrix $\boldsymbol{z^*}$. The objective function is formulated based on Markowitz mean-variance model \cite{markowitz2000mean} to maximize the average eMBB data rate for a given level of risk. The probability constraint \eqref{URLLC_rel_const_main} ensures the URLLC reliability. Furthermore, constraints \eqref{main_const_1}, \eqref{main_const_2}, \eqref{main_const_3}, and \eqref{main_const_4} represent the RBs and power allocation constraints. Finally, constraint \eqref{main_const_5} ensures that the number of punctured mini-slots form a RB $b$ can take any integer number less than $M$. In this paper, we consider that the gNB transmits with maximum allowed power to URLLC users in order to enhance the URLLC transmission reliability.  
	
	The optimization problem (\ref{main_problem}) is a mixed-integer nonlinear programming (MINLP) and NP-hard problem. To find a global optimum solution, we need to search the space of feasible URLLC placement mini-slots with all possible combinations of eMBB user RBs allocation and power allocation. This may require exponential-complexity to solve. To avoid this difficulty, we propose a two-phase approach based on optimization methods and learning in the next two sections.  
	\section{\MakeLowercase{e}MBB Resource Allocation: Optimization Methods Based Approach}
	In this section, we first simplify the objective function in \eqref{main_problem} to a smoothing form and eliminate the complexity caused by the variance, i.e., the variance involves the term $\left(\mathbb{E}_h[r_k^e(t)]\right)^2$, by using an equivalent risk-averse utility function. We consider the exponential function that can capture both the mean and variance as defined in \cite{Risk_Sensitive_2002}: 
	\begin{equation}
	\mathcal{G}(\boldsymbol{x,p,z})=\frac{1}{\mu}\log\mathbb{E}_\textrm{h}\left[\exp\biggl(\mu\sum\limits_{k=1}^{K}r_{k}^e(t)\biggr)\right],
	\label{utility_function}
	\end{equation}
	where the parameter $\mu$ controls the desired risk-sensitivity. The utility function \eqref{utility_function} becomes a strongly concave when increasing the values of $\mu$ negatively reflecting more risk-averse tendencies. Furthermore, the utility function \eqref{utility_function} becomes a risk-neutral at  $\mu\to 0$. The Taylor expansion of the exponential utility function around $\mu=0$ is given as
	\begin{equation}
	\mathcal{G}(\boldsymbol{x,p,z})=\mathbb{E}_\textrm{h}\left[\sum\limits_{k=1}^{K}r_k^e(t)\right]+\frac{\mu}{2}\textsf{Var}\left[\sum\limits_{k=1}^{K}r_{k}^e(t)\right]+\mathcal{O}(\mu^2).
	\label{Taylor_expansion}
	\end{equation}
	
	Equation \eqref{Taylor_expansion} shows that the utility function in \eqref{utility_function} effectively captures both mean and variance terms of eMBB users' data rate. Accordingly, we can obtain an equivalent formulation of \eqref{main_problem} as follows:
	\begin{maxi!}[2]                 
		{\boldsymbol{x, p, z}}                               
		{\frac{1}{\mu}\log\mathbb{E}_\textrm{h}\left[\exp\biggl(\mu\sum\limits_{k=1}^{K}r_{k}^e(t)\biggr)\right]}{\label{modified_problem}}{\textbf{P: }}   
		\addConstraint{\textsf{Pr}\bigg[\sum\limits_{n=1}^{N}r_n^u(t)\leq \zeta L(t)\bigg]}{\leq\Theta^{\mathsf{max}}, \label{URLLC_rel_const_main2}}
		\addConstraint{\sum\limits_{k=1}^{K}\sum\limits_{b=1}^{B}p_{kb}(t)}{\leq P_{\textrm{max}}, \label{modified_const_1}}
		\addConstraint{\sum\limits_{k=1}^{K}x_{kb}(t)}{\leq 1,\;\; \forall \; b\in \mathcal{B}, \label{modified_const_2}}
		\addConstraint{p_{kb}(t)}{\geq 0,\; \forall k\in\mathcal{K},\; b\in\mathcal{B}, \label{modified_const_3}}
		\addConstraint{x_{kb}(t)}{\in \{0, 1\}, \;\; \forall k\in \mathcal{K},\; \; b \in \mathcal{B},\label{modified_const_4}}
		\addConstraint{z_{kb}(t)}{\in\{0, 1, \dots, M\}, \;\; \forall k \in \mathcal{K},\; b\in\mathcal{B}. \label{modified_const_5} }
	\end{maxi!}
	
	Note that \textbf{P} is still a mixed-integer problem which is a non-convex problem. To solve \textbf{P}, we propose a decomposition and relaxation based resource allocation (DRRA) algorithm. In this algorithm, we first decompose \textbf{P} into three sub-problems: \textbf{P1:} eMBB RBs allocation, \textbf{P2:} eMBB power allocation, and \textbf{P3:} URLLC scheduling. Then, we relax $\boldsymbol{x}$ and $\boldsymbol{z}$ to continuous variables in $\tilde{\textbf{P}}\textbf{1}$ and $\tilde{\textbf{P}}\textbf{3}$, respectively. Moreover, the probability constraint \eqref{URLLC_rel_const_main2} is relaxed to a linear constraint using the Markov's inequality. Next, we iteratively solve $\tilde{\textbf{P}}\textbf{1}$, \textbf{P2}, and $\tilde{\textbf{P}}\textbf{3}$ till convergence. Finally, we perform a binary conversion techniques to meet constraint \eqref{modified_const_4} as shown in Algorithm \ref{Algorithm}. 
	\subsection{eMBB RBs allocation problem}
	For any fixed feasible URLLC placement $\boldsymbol{z}$ and $\boldsymbol{p}$, the problem \textbf{P} can be represented as follows:
	\begin{maxi!}[2]                 
		{\boldsymbol{x}}                               
		{\frac{1}{\mu}\log\mathbb{E}_\textrm{h}\left[\exp\biggl(\mu\sum\limits_{k=1}^{K}r_{k}^e(t)\biggr)\right]} {\label{RB_Allocation_main}}{\textbf{P1: }}
		\addConstraint{\sum\limits_{k=1}^{K}x_{kb}(t)}{\leq 1,\;\; \forall \; b\in \mathcal{B}, \label{const2}} 
		\addConstraint{x_{kb}}{\in \{0, 1\}, \;\; \forall k\in \mathcal{K}\; \text{and}\; b \in \mathcal{B}.}
	\end{maxi!}
	
	The optimization problem (\ref{RB_Allocation_main}) is an integer nonlinear programming (MINLP) which can be relaxed to a  problem whose solution is within a constant approximation from the optimal. The fractional solution is then rounded to get a solution to the original integer problem. Accordingly, the optimization problem (9) can be approximated as follows:
	\begin{maxi!}[2]                 
		{\tilde{\boldsymbol{x}}}                               
		{\frac{1}{\mu}\log\mathbb{E}_\textrm{h}\left[\exp\biggl(\mu\sum\limits_{k=1}^{K}r_{k}^e(t)\biggr)\right]\label{eMBB_RB_Allocaiton_objective2} }  
		{\label{RB_Allocaiton_Relaxed}}{\tilde{\textbf{P}}\textbf{1}: }             
		\addConstraint{\sum\limits_{k=1}^{K}\tilde{x}_{kb}(t)}{\leq 1,\;\; \forall \; b\in \mathcal{B}, \label{const_a_realxed}} 
		\addConstraint{0\leq \tilde{x}_{kb}(t)}{\leq 1, \;\; \forall k\in \mathcal{K},\;\; b \in \mathcal{B}. \label{const_b_realxed}}
	\end{maxi!}
	
	\begin{lemma}
		For a given $\boldsymbol{p}$ and $\boldsymbol{z}$, \eqref{RB_Allocaiton_Relaxed} is a convex optimization problem.
	\end{lemma}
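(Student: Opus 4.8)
The plan is to verify the two defining properties of a convex program, namely that the feasible set is convex and that the objective, being maximized, is concave over that set. First I would fix $\boldsymbol p$ and $\boldsymbol z$ and observe that, for any fixed channel realization, each per-RB rate $r_{kb}^e(t)=f_b\big(1-z_{kb}(t)/M\big)\log_2\!\big(1+p_{kb}(t)h_{kb}(t)/\sigma^2\big)$ is a nonnegative constant with respect to the remaining variable $\tilde{\boldsymbol x}$. Consequently the aggregate rate $\sum_{k=1}^{K} r_k^e(t)=\sum_{k=1}^{K}\sum_{b\in\mathcal B}\tilde x_{kb}(t)\,r_{kb}^e(t)$ is an affine function of $\tilde{\boldsymbol x}$, which I will write compactly as $\langle a(h),\tilde{\boldsymbol x}\rangle + c(h)$ with coefficients $a(h)$ determined by the channel. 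The feasible set cut out by \eqref{const_a_realxed}--\eqref{const_b_realxed} is an intersection of finitely many half-spaces, hence a polytope and in particular convex, so only the objective remains to be analyzed.

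Next I would establish that $\Phi(\tilde{\boldsymbol x}):=\log\mathbb E_{\text{h}}\!\left[\exp\!\big(\mu\sum_{k}r_k^e(t)\big)\right]$ is convex in $\tilde{\boldsymbol x}$. Since $\mu\sum_k r_k^e(t)$ is affine in $\tilde{\boldsymbol x}$ for every realization of $h$, this is exactly the cumulant-generating functional of an affine family, and its convexity follows from H\"older's inequality: for $\theta\in[0,1]$ the integrand $\exp\!\big(\mu\sum_k r_k^e(t)\big)$ evaluated at $\theta\tilde{\boldsymbol x}_1+(1-\theta)\tilde{\boldsymbol x}_2$ factorizes into the product of the two endpoint integrands raised to the powers $\theta$ and $1-\theta$, and H\"older bounds the expectation of this product by the corresponding product of expectations, which upon taking logarithms is precisely the convexity inequality for $\Phi$. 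Equivalently, one may invoke the standard log-sum-exp (log-partition) convexity result together with the fact that composition with the affine map $\tilde{\boldsymbol x}\mapsto\mu\sum_k r_k^e(t)$ preserves convexity.

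Finally I would recover concavity of the actual objective $\tfrac1\mu\Phi$ by tracking the sign of $\mu$. In the risk-averse regime the utility \eqref{utility_function} is specified with $\mu<0$ (matching the Taylor identification $\mu/2=-\beta$ read off from \eqref{Taylor_expansion}), so dividing the convex function $\Phi$ by the negative scalar $\mu$ yields a concave objective. A concave function maximized over the convex polytope \eqref{const_a_realxed}--\eqref{const_b_realxed} is by definition a convex optimization problem, completing the argument. The one step that demands genuine care rather than routine bookkeeping is the convexity of $\Phi$ together with the accompanying sign analysis of $\mu$: the affineness of the rate and the polyhedral nature of the constraints are immediate, whereas it is the cumulant-generating structure of the exponential utility, combined with the negativity of $\mu$ that flips convexity into the concavity required for a maximization problem, that makes the claim non-trivial.
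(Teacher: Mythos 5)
Your proof is correct and follows the same two-step outline as the paper (concavity of the objective, convexity of the feasible region), but it justifies the crucial concavity step by a genuinely different and substantially more rigorous argument. The paper's proof only observes that $r_k^e(\tilde{\boldsymbol{x}})$ is linear in $\tilde{\boldsymbol{x}}$ and then appeals to a ``scalar composition property,'' asserting that the ``logarithmic of a convex function'' is concave --- a claim that is false as a general rule (e.g.\ $\log\exp(x^2)=x^2$ is convex), and the paper's proof never tracks the sign of $\mu$ at all. You instead prove convexity of the cumulant-generating functional $\Phi(\tilde{\boldsymbol{x}})=\log\mathbb{E}_{\mathrm{h}}\bigl[\exp\bigl(\mu\sum_k r_k^e(t)\bigr)\bigr]$ via H\"older's inequality (equivalently, log-sum-exp convexity composed with the affine map $\tilde{\boldsymbol{x}}\mapsto\mu\sum_k r_k^e$), and only then recover concavity of $\frac{1}{\mu}\Phi$ by dividing by the negative scalar $\mu$, with the sign pinned down by the risk-averse identification $\mu/2=-\beta$ from \eqref{Taylor_expansion} and consistent with the paper's own statement that \eqref{utility_function} is concave for negative $\mu$. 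What your route buys is twofold: it supplies the correct mechanism that the paper's one-line composition argument misstates, and it makes explicit that the lemma holds only in the regime $\mu<0$ --- for $\mu>0$ the objective $\frac{1}{\mu}\Phi$ would be convex and maximizing it over the polytope \eqref{const_a_realxed}--\eqref{const_b_realxed} would not be a convex program --- a caveat the lemma statement and the paper's proof silently omit. The paper's proof, for its part, is shorter and suffices as a sketch once the reader already knows the log-sum-exp fact, but as written it would not survive scrutiny without exactly the repairs you provide.
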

	\begin{proof}
		We prove the convexity of \eqref{RB_Allocaiton_Relaxed} in two steps. First, we prove that the objective function $\mathcal{G}(\cdot)$ is concave with respect to $\tilde{\boldsymbol{x}}$. Then, we prove  the convexity of the feasible region. Here, we can notice that $r_k^e(\tilde{\boldsymbol{x}})$ is a linear function in $\tilde{\boldsymbol{x}}$ for $0\leq \tilde{\boldsymbol{x}}\leq 1$. Moreover, using the scalar composition property  in convexity,  we have logarithmic of a convex function to be a concave. Next, the constraints \eqref{const_a_realxed} and \eqref{const_a_realxed} are linear constraints. Therefore, \eqref{RB_Allocaiton_Relaxed} is a convex optimization problem.
	\end{proof}
	Later, we use the threshold rounding technique described in \cite{rounding_1} to enforce the relaxed $\boldsymbol{x}$ to be a binary variable. Let $\eta\in[0,1]$ be a rounding threshold. Then, we set $x_{kb}^*$ as 
	\begin{equation}
	x_{kb}^*=\begin{cases} 1, &\mbox{if} \;\tilde{x}_{kb}^*\geq\eta,\\
	0, & \mbox{otherwise. }
	\end{cases}
	\label{rounding_threshold}
	\end{equation}
	
	The binary solution obtained from \eqref{rounding_threshold} may violate RB allocation constraint. To overcome this issue, we modify problem \eqref{RB_Allocaiton_Relaxed} as follows:
	\begin{maxi!}[2]                 
		{\tilde{\boldsymbol{x}}}                               
		{\frac{1}{\mu}\log\mathbb{E}_\textrm{h}\left[\exp\biggl(\mu\sum\limits_{k=1}^{K}r_{k}^e(t)\biggr)\right]+\alpha\mathrm{\Delta} \label{eMBB_RB_Allocaiton_objective_Rounding} }  
		{\label{RB_Allocaiton_Rounding}}{}             
		\addConstraint{\sum\limits_{k=1}^{K}\tilde{x}_{kb}(t)}{\leq 1+\mathrm{\Delta},\;\; \forall \; b\in \mathcal{B}, \label{const_a}} 
		\addConstraint{0\leq \tilde{x}_{kb}(t)}{\leq 1, \;\; \forall k\in \mathcal{K},\; \; b \in \mathcal{B},\label{const_b}}
	\end{maxi!}
	where $\mathrm{\Delta}$ is the maximum violation of RB allocation constraint given as
	\begin{equation}
	\mathrm{\Delta}=\max\{0, \sum\limits_{k\in\mathcal{K}}x_{kb}-1\}, \; \forall b\in\mathcal{B},
	\end{equation}
	and $\alpha$ is the weight of $\mathrm{\Delta}$ which take a negative value. The problem (15) aims at maximizing $\mathcal{G}(\tilde{\boldsymbol{x}})$ while minimizing the rounding error $\Delta$. Thus, the feasible solution of \eqref{RB_Allocation_main} is obtained at $\mathrm{\Delta}=0$.
	\subsection{eMBB power allocation problem}
	For any given $\tilde{\boldsymbol{x}}$ and $\boldsymbol{z}$, the power allocation problem can be given as 
	\begin{maxi!}[2]                 
		{\boldsymbol{p}}                               
		{\frac{1}{\mu}\log\mathbb{E}_\textrm{h}\left[\exp\biggl(\mu\sum\limits_{k=1}^{K}r_{k}^e(t)\biggr)\right]}{\label{power_allocation_problem}}{\textbf{P2: }} 
		\addConstraint{\sum\limits_{k=1}^{K}\sum\limits_{b=1}^{B}p_{kb}(t)}{\leq P_{\textrm{max}}, \label{const_1_p}}
		\addConstraint{p_{kb}(t)}{\geq 0,\; \forall k\in\mathcal{K},\; b\in\mathcal{B} \label{const_2_p}}.
	\end{maxi!}
	
	\begin{lemma}
		For a given $\tilde{\boldsymbol{x}}$ and $\boldsymbol{z}$, \eqref{power_allocation_problem} is a convex optimization problem.
	\end{lemma}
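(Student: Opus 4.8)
The plan is to mirror the two-step structure used for Lemma 1: first establish that the objective $\mathcal{G}(\boldsymbol{p})$ is concave in $\boldsymbol{p}$ when $\tilde{\boldsymbol{x}}$ and $\boldsymbol{z}$ are held fixed, then verify that the feasible region defined by \eqref{const_1_p} and \eqref{const_2_p} is convex. The essential difference from Lemma 1 is that $r_k^e(t)$ is no longer affine in the decision variable; with respect to $\boldsymbol{p}$ it is a \emph{concave} function, so the composition argument inside the exponential utility must be handled with more care.

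First I would analyze the rate terms. For fixed $\tilde{\boldsymbol{x}}$ and $\boldsymbol{z}$, each summand $r_{kb}^e(t)$ equals a nonnegative constant $f_b\big(1-z_{kb}(t)/M\big)\tilde{x}_{kb}(t)$ multiplying $\log_2\!\big(1+p_{kb}(t)h_{kb}(t)/\sigma^2\big)$. Since the logarithm of a positive affine argument is concave and the multiplying constant is nonnegative, each $r_{kb}^e(t)$ is concave in $p_{kb}(t)$; hence $g(\boldsymbol{p})\coloneqq\sum_{k=1}^{K}r_k^e(t)$ is concave in $\boldsymbol{p}$ as a nonnegatively weighted sum of concave functions.

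Next I would handle the exponential utility. Working in the risk-averse regime $\mu<0$ adopted after \eqref{utility_function}, the map $\boldsymbol{p}\mapsto\mu g(\boldsymbol{p})$ is convex. The key step is that the cumulant-type functional $\boldsymbol{p}\mapsto\log\mathbb{E}_\textrm{h}\!\big[\exp(\mu g(\boldsymbol{p}))\big]$ is convex whenever $\mu g(\boldsymbol{p})$ is convex in $\boldsymbol{p}$ for every channel realization; this follows from H\"older's inequality applied inside the expectation (equivalently, from convexity and monotonicity of the log-sum-exp function once the expectation is discretized over channel states). Multiplying this convex function by $1/\mu<0$ flips the sign and delivers a concave objective $\mathcal{G}(\boldsymbol{p})$. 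I expect this composition to be the main obstacle, because one cannot simply invoke ``the logarithm of a convex function is concave''; the concavity hinges jointly on the convexity of $\log\mathbb{E}_\textrm{h}[\exp(\cdot)]$ and on the negative sign of $\mu$, and conflating these would leave a genuine gap.

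Finally, the feasible region is the intersection of the half-space $\sum_{k}\sum_{b}p_{kb}(t)\le P_{\textrm{max}}$ with the nonnegative orthant $p_{kb}(t)\ge 0$; both are affine constraints, so the feasible set is a convex polytope. Maximizing a concave objective over a convex feasible set then establishes that \eqref{power_allocation_problem} is a convex optimization problem.
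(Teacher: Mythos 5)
Your proof is correct, and it follows the same two-step skeleton as the paper's own proof: establish concavity of the objective in $\boldsymbol{p}$ for fixed $\tilde{\boldsymbol{x}}$ and $\boldsymbol{z}$, then observe that \eqref{const_1_p} and \eqref{const_2_p} are affine so the feasible set is convex. The differences lie in how each step is executed. For the rate term, the paper computes the second derivative of $r_k^e(t)$ in $p_{kb}$ explicitly and notes it is always negative, whereas you invoke composition rules (logarithm of a positive affine function, scaled by a nonnegative constant); these are interchangeable, the paper's being more concrete and yours more economical. The substantive divergence is at the exponential utility: the paper disposes of it with the single assertion that ``combining $r_k^e(t)$ with $\exp$ and $\log$ functions results in a concave function,'' which as stated is not a valid composition rule (the logarithm of a convex function need not be concave, and $\exp$ of a concave function is neither convex nor concave in general). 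You instead supply the actual justification: with $\mu<0$, the map $\boldsymbol{p}\mapsto\mu\sum_k r_k^e(t)$ is convex for every channel realization, the cumulant-type functional $\boldsymbol{p}\mapsto\log\mathbb{E}_{\mathrm{h}}\bigl[\exp\bigl(\mu\sum_k r_k^e(t)\bigr)\bigr]$ preserves that convexity (by H\"older's inequality, equivalently the log-sum-exp argument after discretizing over channel states), and only the final scaling by $1/\mu<0$ flips the sign to yield concavity of $\mathcal{G}(\boldsymbol{p})$. You are also right to flag that the conclusion hinges jointly on the H\"older step and on the risk-averse sign convention $\mu<0$ adopted after \eqref{utility_function}; the same chain would not go through for $\mu>0$. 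In short, your route proves rigorously the composition step that the paper merely asserts, at the cost of a slightly longer argument; it closes a genuine gap in the paper's one-line reasoning rather than contradicting it.
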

	\begin{proof}
		We first prove the convexity of $r^e_k(t)$ with respect to $p_k$ by calculating the second derivative as
		\begin{equation}
		\frac{\partial^2 r_k^e(t)}{\partial p_k^2(t)}=\frac{-x_{kb}f_b(1-z_{kb}/M)(h^e_{kb}/\sigma^2)^2}{\Big(1+\frac{p_{kb}h_{kb}^e}{\sigma^2}\Big)^2},
		\end{equation}
		which is always negative for any value of $p_{kb}$. Thus, combining $r_k^e(t)$ with $\exp$ and $\log$ functions results in a concave function. Moreover, constraints \eqref{const_1_p} and \eqref{const_2_p} are linear constraints. Therefore, \eqref{power_allocation_problem} is a convex optimization problem.
	\end{proof}
	\subsection{URLLC scheduling problem}
	For a given $\tilde{\boldsymbol{x}}$ and $\boldsymbol{p}$ the URLLC scheduling problem can be given as
	\begin{maxi!}[2]                 
		{\boldsymbol{z}}                               
		{\frac{1}{\mu}\log\mathbb{E}_\textrm{h}\left[\exp\biggl(\mu\sum\limits_{k=1}^{K}r_{k}^e(t)\biggr)\right]}{\label{URLLC_Resource_Allocaiton_Problem}}{\textbf{P3: }}   
		\addConstraint{\text{Pr}\biggl(\sum\limits_{n=1}^{N}r_n^u(t)\leq \zeta L(t)\biggr)}{\leq\Theta^*, \label{URLLC_reliability_Const3}}
		\addConstraint{z_{kb}(t)}{\in\{0, 1, \dots, M\}, \;\; \forall k \in \mathcal{K},\; b\in\mathcal{B} \label{puncturing_const}}.
	\end{maxi!}
	
	The optimization problem \eqref{URLLC_Resource_Allocaiton_Problem}  is a combinatorial optimization problem which is an NP-hard problem for which it is difficult to obtain a closed-form solution. To simplify \eqref{URLLC_Resource_Allocaiton_Problem}, we replace the integer variable $z_{kb}$ by a continuous weighting variable $w_{kb}\in [0,1]$, where $w_{kb}$ is the puncturing weight of the RB $b$ by URLLC traffic, i.e., more resources will be punctured from the RBs with higher weighting values. Therefore, we can approximate the eMBB data rate as
	\begin{equation}
	\tilde{r}_{kb}^e(t)=f_b\Big(1-w_{kb}(t)\Big)\log_2\left(1+\frac{p_{kb}(t)h_{kb}(t)}{\sigma^2}\right).
	\end{equation}
	Then, using the  definition of $z_{kb}(t)$, i.e., the number of punctured mini-slots from the RB \textit{b} of eMBB user \textit{k} at time slot \textit{t}, the URLLC data rate in \eqref{URLLC_data_rate} is modified as
	\begin{equation}
	\tilde{r}_{nb}^u(t)=\frac{f_bw_{kb}(t)}{{N}}\log\left(1+\frac{p_{nb}^u(t)h_{nb}^u(t)}{\sigma^2}\right)
	-\sqrt\frac{D_{nb}^u(t)}{c_{nb}^u(t)}Q^{-1}(\vartheta).
	\end{equation} 
	
	We use the Markov’s Inequality to represent the chance constraint \eqref{URLLC_reliability_Const3} as a linear constraint:
	\begin{equation}
	\textsf{Pr}\left[\sum\limits_{n\in\mathcal{N}}r_{n}^u(t)\leq \zeta L(t)\right]\leq \frac{\zeta \mathbb{E}[L]}{\sum\limits_{n\in\mathcal{N}}r_{n}^u(t)}.
	\end{equation}
	
	Accordingly, the URLLC resource allocation problem can be reformulated as follows: 
	\begin{maxi!}[2]                 
		{\boldsymbol{w}}                               
		{\frac{1}{\mu}\log\mathbb{E}_\textrm{h}\left[\exp\biggl(\mu\sum\limits_{k=1}^{K}\tilde{r}_{k}^e(t)\biggr)\right]}{\label{URLLC_Resource_Allocaiton_Problem_approximated}}{\tilde{\textbf{P}}\textbf{3: }}   
		\addConstraint{\sum\limits_{n\in\mathcal{N}}\tilde{r}_{n}^u(\boldsymbol{w})}\geq \frac{\zeta \mathbb{E}[L]}{\Theta^*},{\label{URLLC_reliability_Const3_approximated}}
		\addConstraint{0\leq w_{kb}}{\leq 1, \;\; \forall k \in \mathcal{K},\; b\in\mathcal{B} \label{const_2_urllc}}.
	\end{maxi!}
	
	\begin{lemma}
		For a given $\tilde{\boldsymbol{x}}$ and $\boldsymbol{p}$, \eqref{URLLC_Resource_Allocaiton_Problem_approximated} is a convex optimization problem.
	\end{lemma}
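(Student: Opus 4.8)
The plan is to follow the same two-step template used in the proofs of Lemmas 1 and 2: first establish that the objective $\mathcal{G}(\boldsymbol{w})$ is concave in the puncturing weights $\boldsymbol{w}$, and then show that the feasible region carved out by \eqref{URLLC_reliability_Const3_approximated} and \eqref{const_2_urllc} is convex. Since the maximization of a concave function over a convex feasible set is by definition a convex program, the conclusion follows immediately once both pieces are in place.

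For the objective, I would first observe that for fixed $\tilde{\boldsymbol{x}}$ and $\boldsymbol{p}$, the approximate eMBB rate $\tilde{r}_{kb}^e(t)=f_b(1-w_{kb}(t))\log_2(1+p_{kb}(t)h_{kb}(t)/\sigma^2)$ is \emph{affine} in $w_{kb}$, since $w_{kb}$ is the only optimization variable entering it and it appears linearly with the constant coefficient $-f_b\log_2(1+p_{kb}h_{kb}/\sigma^2)$. Hence $\sum_{k}\tilde{r}_k^e(\boldsymbol{w})$ is affine in $\boldsymbol{w}$. Next I would invoke the convexity of the cumulant generating functional: the map $u\mapsto\log\mathbb{E}_{\mathrm{h}}[\exp(u)]$ is convex (a standard consequence of H\"older's inequality), and composition with the affine map $\boldsymbol{w}\mapsto\mu\sum_{k}\tilde{r}_k^e(\boldsymbol{w})$ preserves convexity. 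Multiplying by $1/\mu$ with $\mu<0$, which is precisely the risk-averse regime in which \eqref{utility_function} is concave as noted after \eqref{Taylor_expansion}, reverses the sign, so $\mathcal{G}(\boldsymbol{w})$ is concave in $\boldsymbol{w}$. This is the identical argument already used for $\tilde{\boldsymbol{x}}$ in Lemma 1 and for $\boldsymbol{p}$ in Lemma 2, only with the affine variable now being $\boldsymbol{w}$.

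For the feasible region, the key observation is that the finite-block-length URLLC rate $\tilde{r}_{nb}^u(t)=(f_b w_{kb}(t)/N)\log(1+p_{nb}^u h_{nb}^u/\sigma^2)-\sqrt{D_{nb}^u/c_{nb}^u}\,Q^{-1}(\vartheta)$ is likewise affine in $w_{kb}$: the channel-dispersion penalty is a constant for fixed $\boldsymbol{p}$, and the leading term is linear in $w_{kb}$. Consequently $\sum_{n}\tilde{r}_n^u(\boldsymbol{w})$ is affine, and the Markov-relaxed reliability constraint \eqref{URLLC_reliability_Const3_approximated} collapses to an affine inequality $\sum_{n}\tilde{r}_n^u(\boldsymbol{w})\ge\zeta\mathbb{E}[L]/\Theta^*$, i.e.\ a half-space, which is convex. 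The box constraints \eqref{const_2_urllc} are intervals and hence convex, and the intersection of convex sets is convex, so the feasible region is convex.

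I expect the only genuinely delicate point to be the objective rather than the constraints: one must recognize that $\frac{1}{\mu}\log\mathbb{E}_{\mathrm{h}}[\exp(\cdot)]$ is concave only through the combination of the convexity of the log-moment-generating functional together with the negative sign of $\mu$, and it is the affinity of $\tilde{r}_k^e$ in $\boldsymbol{w}$ that allows the external affine-composition rule to apply cleanly. The constraint side is then routine once the affinity of $\tilde{r}_n^u$ in $\boldsymbol{w}$ is noted; indeed, the whole purpose of the Markov's-inequality step leading to \eqref{URLLC_reliability_Const3_approximated} is to replace the original probabilistic and generally non-convex constraint \eqref{URLLC_reliability_Const3} by this affine surrogate, which is exactly what makes $\tilde{\textbf{P}}\textbf{3}$ convex.
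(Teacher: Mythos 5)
Your proof is correct and follows essentially the same route as the paper's own proof: affinity of $\tilde{r}_k^e$ and $\tilde{r}_n^u$ in $\boldsymbol{w}$ for fixed $\tilde{\boldsymbol{x}}$ and $\boldsymbol{p}$, concavity of the objective via composing the log-moment-generating functional with that affine map, and linearity of the reliability and box constraints. If anything, your write-up is more careful than the paper's terse claim that combining the rate with $\exp$ and $\log$ ``gives a concave function,'' since you make explicit that concavity of $\frac{1}{\mu}\log\mathbb{E}_{\mathrm{h}}\left[\exp(\cdot)\right]$ rests on the convexity of the cumulant-generating functional together with the sign condition $\mu<0$, which the paper leaves implicit.
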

	\begin{proof}
		It is clear that $r_k^e(t), \;\forall k\in \mathcal{K}$ is a linear with respect to $w_k$ and combining it with $\exp$ and $\log$ functions gives a concave function, for all $0\leq w_{kb}\leq 1$. Furthermore, constraints \eqref{URLLC_reliability_Const3_approximated}, and \eqref{const_2_urllc} are linear constraints with respect to $\boldsymbol{w}$. Maximizing a concave objective function with linear constraints is a convex optimization problem. This proves the convexity of \eqref{URLLC_Resource_Allocaiton_Problem_approximated}.
	\end{proof}
	
	We obtain an equivalent relation of the number of punctured mini-slots $z_{kb}$ with parameters $M$ and $w_{kb}$, i.e., $\boldsymbol{z}^{(i+1)}= \floor*{M\times \boldsymbol{w}^{(i+1)}}$, which ensures constraint \eqref{puncturing_const}.
	
	Next, we analyze the quality of rounding technique by measuring the integrality gap which measures the ratio between the value of $\mathcal{G}(\boldsymbol{\tilde{x}}^*, \boldsymbol{p}^*, \boldsymbol{z}^*)+\alpha\Delta$ achieved by feasible rounded solution and the value of $\mathcal{G}(\tilde{\boldsymbol{x}}^*, \boldsymbol{p}^*, \boldsymbol{z}^*)$ achieved by the relaxed solution. According to the definition and proof of integrality gap in \cite{rounding_1}, we have the following definition and remark:
		\begin{definition}
		Given $\mathcal{G}(\tilde{\boldsymbol{x}}^*, \boldsymbol{p}^*, \boldsymbol{z}^*)$ and its rounded problem $\mathcal{G}(\boldsymbol{\tilde{x}}^*, \boldsymbol{p}^*, \boldsymbol{z}^*)+\alpha\Delta$, the integrality gap is defined by:
		\begin{equation}
		    \varrho=\max\limits_{\boldsymbol{\tilde{x}}}\frac{\mathcal{G}(\boldsymbol{\tilde{x}}^*, \boldsymbol{p}^*, \boldsymbol{z}^*)+\alpha\Delta}{\mathcal{G}(\tilde{\boldsymbol{x}}^*, \boldsymbol{p}^*, \boldsymbol{z}^*)},
		\end{equation}
		where the solution of $\mathcal{G}(\tilde{\boldsymbol{x}}^*, \boldsymbol{p}^*, \boldsymbol{z}^*)$ is obtained through relaxation of $\boldsymbol{x}$, while the solution $\mathcal{G}(\boldsymbol{\tilde{x}}^*, \boldsymbol{p}^*, \boldsymbol{z}^*)+\alpha\Delta$ is obtained after rounding the relaxed variable. We consider that the best rounding is achieved when $\varrho$ $(\varrho\leq1)$ is closer to $1$.
		\end{definition}
		\begin{remark} Given $\mathcal{G}(\tilde{\boldsymbol{x}}^*, \boldsymbol{p}^*, \boldsymbol{z}^*)$ whose instances form a convex set, for every relaxation, the oblivious rounding scheme defined as $\mathcal{G}(\boldsymbol{x}^*, \boldsymbol{p}^*, \boldsymbol{z}^*)$ is individually tight \cite{rounding_1}.
		\end{remark}
	
\begin{algorithm}[t!]
                                	\centering
                                		\caption{\strut: DRRA Algorithm for the eMBB/URLLC coexistence Problem}
                                		\label{alg:profit}
                                		\begin{algorithmic}[1]
                                			\STATE{\textbf{Initialization:} Set $i=0$, $\epsilon>0$, and find initial feasible solutions $(\boldsymbol{\tilde{x}}^{(0)}, \boldsymbol{p}^{(0)}, \boldsymbol{w}^{(0)})$;}
                                			\STATE{Decompose \textbf{P} into \textbf{P1}, \textbf{P2}, and \textbf{P3}};
                                			\STATE{Relax and transform, respectively, the integer variables in \textbf{P1} and \textbf{P3} to their continuous form $\tilde{\textbf{P}}\textbf{1}$ and $\tilde{\textbf{P}}\textbf{3}$};
                                			\REPEAT
                                			\STATE{Compute $\boldsymbol{\tilde{x}}^{(i+1)}$ from $\tilde{\textbf{P}}\textbf{1}$ at given $\boldsymbol{p}^i$, and $\boldsymbol{z}^i$};
                                			\STATE{Compute $\boldsymbol{p}^{(i+1)}$ from \textbf{P2} at given $\boldsymbol{\tilde{x}}^{(i+1)}$, and $\boldsymbol{z}^i$};
                                			\STATE{Compute $\boldsymbol{w}^{(i+1)}$ from $\tilde{\textbf{P}}\textbf{3}$ at given $\boldsymbol{\tilde{x}}^{(i+1)}$, and $\boldsymbol{p}^{(i+1)}$, and set $\boldsymbol{z}^{(i+1)}=\floor*{M\times \boldsymbol{w}^{(i+1)}}$};
                                			\STATE{$i = i + 1$}; 
                                			\UNTIL{ $\parallel\frac{\mathcal{G}^{(i)}-\mathcal{G}^{(i+1)}}{\mathcal{G}^{(i)}} \parallel \ \leq \epsilon$};
                                			\STATE{Generate a binary solution $\boldsymbol{x}^*$ from $\tilde{x}^{(i+1)}$ by using the rounding equation in (14) and solving (15)};
                                			\STATE{Then, set $\big(\boldsymbol{x}^{*},\boldsymbol{p}^{*}=\boldsymbol{p}^{(i+1)} , \boldsymbol{z}^{*}=\boldsymbol{z}^{(i+1)}\big)$ as the desired solution}.
            		\end{algorithmic}
                                		\label{Algorithm}
                    	\end{algorithm}
Algorithm 1 starts by initializing $i=0$, setting $\epsilon$ to a small positive number, and finding initial feasible points ($\boldsymbol{\tilde{x}}^{(0)}, \boldsymbol{p}^{(0)}, \boldsymbol{z}^{(0)}$). Then, the algorithm starts an iterative process. At each iteration $i+1$, the solution is updated by solving $\tilde{\textbf{P}}\textbf{1}$, \textbf{P2}, and $\tilde{\textbf{P}}\textbf{3}$ until achieving $\parallel\frac{\mathcal{G}^{(i)}-\mathcal{G}^{(i+1)}}{\mathcal{G}^{i}}\parallel\leq \epsilon$. Next, Algorithm 1 generates a binary solution for $\boldsymbol{\tilde{x}}^{(i+1)}$ using the rounding technique in (14) and solving (15), where the best rounding is achieved when $\varrho\to 1$.

The value of $\epsilon$ is selected to guarantee an $\epsilon$-optimal solution such that $\parallel\frac{\mathcal{G}^{(i)}-\mathcal{G}^{(i+1)}}{\mathcal{G}^{i}}\parallel\leq \epsilon$ is satisfied. An $\epsilon$-optimal solution is defined as $(\boldsymbol{x}^*, \boldsymbol{p}^*, \boldsymbol{z}^*)\in\{\boldsymbol{x}, \boldsymbol{p}, \boldsymbol{z}|\boldsymbol{x}\in\mathcal{X}, \boldsymbol{p}\in\mathcal{P}, \boldsymbol{z}\in\mathcal{Z}, \mathcal{G}(\boldsymbol{x}, \boldsymbol{p}, \boldsymbol{z})-\mathcal{G}(\bar{\boldsymbol{x}}, \bar{\boldsymbol{p}}, \bar{\boldsymbol{z}})\}$, where $\mathcal{G}(\bar{\boldsymbol{x}}, \bar{\boldsymbol{p}}, \bar{\boldsymbol{z}})$ is globally optimal \cite{hong2015unified}. According to the convergence analysis in Appendix A, the algorithm converges sub-linearly in the order of $1/\epsilon$.        
	\section{Intelligent URLLC Scheduling: Deep Reinforcement Learning Based Approach}
	In the previous section, we have proposed a DRRA algorithm to  solve the eMBB resource allocation problem and find an approximate solution for the URLLC scheduling problem. The URLLC scheduling obtained by the DRRA algorithm may violate the URLLC reliability constraint at the worst case conditions due to the relaxation applied to the probability constraint. In practice, URLLC traffic is random and sporadic; thus, it is necessary to dynamically and intelligently allocate resources to the URLLC traffic by interacting with the environment. Therefore, we propose a DRL-based algorithm to tackle the dynamic URLLC traffic and channel variations. In this algorithm, the URLLC reliability constraint is dynamically verified and the system parameters are adjusted as per URLLC requirements. Going further, we leverage the URLLC scheduling results obtained by the DRRA algorithm to learn the proposed DRL-based algorithm at the initial start to improve its convergence time. Hence, combining the advantages of the optimization-based algorithm (DRRA) and the DRL-based algorithm compound in a reliable and efficient resource allocation mechanism. 
	
	Generally, a reinforcement learning model is defined by its action space $\mathcal{A}$, state space $\mathcal{S}$, and reward $R(t)$. The algorithm takes an action $\boldsymbol{a}(t)\in\mathcal{A}$ at each state $\boldsymbol{s}(t)\in\mathcal{S}$ and receives the reward $R(t)$. 
	\subsubsection{State space} We consider the state space with the tuples defining the state of each eMBB user, i.e., the allocated RBs, transmission power, and channel variations, and URLLC traffic states, i.e., number of arrived URLLC packets and channel variations, at each decision epoch (time slot). Therefore, the state at time slot $t$ can be defined as $\boldsymbol{s}(t)=\{\boldsymbol{x}(t), \boldsymbol{p}(t), \boldsymbol{h^e}(t), L(t), \boldsymbol{h^u(t)}\}$. In order to reduce the state space dimensions, we define $\hat{r}^e_{k}(t)$ as the data rate of eMBB user $k$ without puncturing:
	\begin{equation}
	\hat{r}_k^e(t)=\sum\limits_{b\in\mathcal{B}}x_{kb}(t)f_b\log_2\bigg(1+\frac{p_{kb}(t)h_{kb}(t)}{\sigma^2}\bigg),
	\end{equation}
	which depends on the allocated RBs, allocated power, and channel state. Therefore, the state space at time slot $t$ can be reduced to $\boldsymbol{s}(t)=\{\boldsymbol{\hat{r}^e}(t), \boldsymbol{h^u}(t), L(t)\}$.
	\subsubsection{Action space} The action space is defined as the number of punctured mini-slots of each RB, $\boldsymbol{a}(t)=\{ z_{kb}(t), \; \forall k\in\mathcal{K}, b\in\mathcal{B}\}$, which is a $B\times M$ puncturing matrix. 
	\subsubsection{Reward} Considering the requirements of eMBB and URLLC services, we formulate the reward function as follows:
	\begin{equation}
	\begin{split}
	R(\boldsymbol{a}(t), \boldsymbol{s}(t))=g(t)
	+\phi(t)\mathbb{E}\bigg[\sum\limits_{n=1}^{N}r_n^u(t)-\zeta L(t)\bigg],
	\end{split}
	\label{reward_fucntion}
	\end{equation}
	where $\phi(t)$ is a time-varying weight that ensures the URLLC reliability over time slots as the network states change dynamically. We define $\phi(t)$ as follows:
	\begin{equation}
	\phi(t+1)=\max\left\{\phi(t)+\Theta(t)-\Theta^{\mathsf{max}},  0\right\}, 
	\label{urllc_weight_updating}
	\end{equation}
	where $\Theta(t)$ is the estimated outage probability at time slot $t$ defined in (7) which can be obtained using an empirical measurement of the number of time slots (in the last $T$ slots) where $\sum_{n\in\mathcal{N}}r_n^u(t)\leq\zeta L(t)$ over $T$. 
	
	The agent aims to choose a policy $\pi(\boldsymbol{a, s})=\{\pi_{b}^m, \; \forall b\in\mathcal{B}, m\in\mathcal{M}\}$, where $\pi_{b}^m$ is the probability of puncturing $m$ mini-slots from the RB $b$ given the network state $\boldsymbol{s}(t)$. Specifically,  the agent observes the network state $\boldsymbol{s}(t)$ and makes a decision on the punctured resources from each RB based on its learned policy strategy. After that, the agent calculates the immediate reward $R(t)$ from \eqref{reward_fucntion} based on the selected actions and provides the new network state information to the agent for the current obtained reward. Finally, the agent learns a new policy in the next decision epoch according to the feedback.
	
	Let $Q^{\pi}(\boldsymbol{s},\boldsymbol{a})$ denote the cumulative discounted reward with a given policy $\boldsymbol{\pi}$, defined as
	\begin{equation}
	Q^{\pi}(\boldsymbol{s},\boldsymbol{a})=\mathbb{E}\left[\sum\limits_{t=0}^{\infty}\gamma(t) R\big(\boldsymbol{s}(t), \boldsymbol{a}(t)\big)|s_0=s, \pi\right]. 
	\label{Q_Funciton}
	\end{equation}
	
	The function $Q^{\pi}(\boldsymbol{s},\boldsymbol{a})$ can be calculated using the Bellman equation \cite{TWC_AC_2014}:
	\begin{equation}
	Q^{\pi}(\boldsymbol{s},\boldsymbol{a})=\mathbb{E}\left[R\big(\boldsymbol{s}(t), \boldsymbol{a}(t)\big)+Q^{\pi}\big(\boldsymbol{s}(t+1), \boldsymbol{a}(t+1)\big)\right].
	\label{Bellman_Update}
	\end{equation}
	
	Let $J(\pi)$ be the network objective reward value, which is defined as \cite{TWC_AC_2014}:
	\begin{equation}
	J(\pi)=\mathbb{E}\Big[Q^{\pi}(\boldsymbol{s},\boldsymbol{a})\Big]=\int_S\int_A\pi(\boldsymbol{s},\boldsymbol{a})Q^{\pi}(\boldsymbol{s},\boldsymbol{a})dads.
	\label{J_function}
	\end{equation}
	
	The objective is to find the policy that maximizes $J(\pi)$. We observe in \eqref{J_function} that it is possible to optimize the policy $\pi$ using different techniques such as the Q-learning, and policy gradient techniques. However, applying the Q-learning method may fail to find the optimal policy in real-time as the learning rate of the Q-function is slow \cite{MIT_RL, AC_2018_TMC}. The policy gradient can provide a good policy with a faster convergence rate than Q-learning. Therefore, we propose a policy gradient based actor-critic learning (PGACL) algorithm to learn policies by combining the policy learning and value learning with a good convergence rate. The PGACL learning has the ability to optimize the policy with a fast convergence rate and  a low computational cost by leveraging the gradient method.
	\subsection{PGACL algorithm for URLLC scheduling}
	The PGACL consists of two main parts, namely the actor and the critic. The actor part controls the policy based on the network state, while the critic part evaluates the selected policy by the reward function as shown in Fig. \ref{AC_Fig}.
		\begin{figure}
		\centering
		\includegraphics[width=0.3\linewidth]{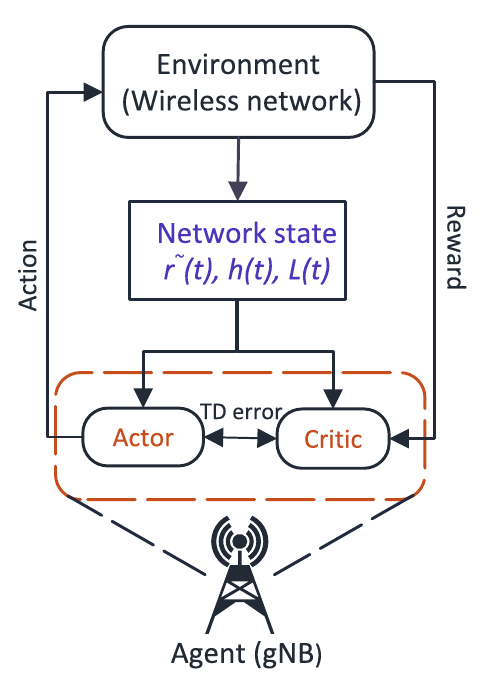}
		\caption{The actor-critic based learning for URLLC scheduling problem.}
		\label{AC_Fig}	
	\end{figure}
		\subsubsection{The actor part} The actor updates the policies based on the policy gradient method. The policy is initially built based on a parameter vector $\boldsymbol{\theta}$ as $\pi_{\boldsymbol{\theta}}(\boldsymbol{s}, \boldsymbol{a})=Pr(\boldsymbol{a}|\boldsymbol{s},\boldsymbol{\theta})$. Here, the gradient of the objective function in \eqref{J_function} with respect to $\boldsymbol{\theta}$ is as follows:
	\begin{equation}
	\nabla_{\theta}J(\pi_{\boldsymbol{\theta}})=\int_{S}\int_A\nabla_{\pi_{\theta_k}}Q^{\pi_{\theta}}(\boldsymbol{s}, \boldsymbol{a})dads.
	\label{gradient_objective_funciton}
	\end{equation}
	
	The parameterized policy $\pi_{\theta}(\boldsymbol{s,a})$ is defined by the Gibbs distribution as follows \cite{TWC_AC_2014}:
	\begin{equation}
	\pi_{\boldsymbol{\theta}}(\boldsymbol{s}, \boldsymbol{a})=\frac{\exp(\theta\Phi(\boldsymbol{s},\boldsymbol{a}))}{\sum_{a'\in \mathcal{A}}\exp(\theta\Phi(\boldsymbol{s},\boldsymbol{a'}))},
	\end{equation}
	where $\Phi(\boldsymbol{s, a})$ is the feature vector.
	
	Finally, the vector $\boldsymbol{\theta}$ is updated using the gradient function in \eqref{gradient_objective_funciton} as follows:
	\begin{equation}
	\boldsymbol{\theta}(t+1)=\boldsymbol{\theta}(t)+\rho_a\nabla_{\theta}J(\pi_{\boldsymbol{\theta}}),
	\end{equation}
	where $\rho_a$ is the learning rate of the actor. 
	\subsubsection{The critic part} The objective of the critic part is to evaluate the policy that the learning algorithm searches. The function estimator is used to approximate the value function as Bellman equation fails to compute the $Q^{\pi}(\boldsymbol{s},\boldsymbol{a})$ function for the infinite states \cite{MIT_RL}. Specifically, the linear function estimator is applied to evaluate the value function. Hence, the approximated value function is given as 	
	\begin{equation}
	V(\boldsymbol{s},\boldsymbol{a})=\boldsymbol{v}^{T}\boldsymbol{\varphi}(\boldsymbol{s},\boldsymbol{a})=\sum\limits_{i\in\mathcal{S}}v_i\varphi_i(\boldsymbol{s},\boldsymbol{a}),
	\label{value_function}
	\end{equation}
	where $\boldsymbol{\varphi}=[\varphi_1(\boldsymbol{s},\boldsymbol{a}), \dots, \varphi_S(\boldsymbol{s},\boldsymbol{a})]^T$ denotes the basis function vector, $\boldsymbol{v}(\boldsymbol{s}, \boldsymbol{a}) = (v_1, \dots, v_S)^T$ is a weight parameter vector. To compute the error between the estimated and real values, the critic uses the Temporal-Difference (TD) method, defined as
	\begin{equation}
	\delta(t)=R(t+1)+\gamma V\big(\boldsymbol{s}(t+1), \boldsymbol{a}(t+1)\big)-V\big(\boldsymbol{s}(t),\boldsymbol{a}(t)\big).
	\label{TD}
	\end{equation}
	
	The weights parameter vector $\boldsymbol{v}(\boldsymbol{s}, \boldsymbol{a})$ is updated by the gradient descent method using the linear function estimator in \eqref{value_function} as follows: 
	\begin{equation}
	\begin{split}
	\boldsymbol{v}(\boldsymbol{s}(t+1), \boldsymbol{a}(t+1))&=\boldsymbol{v}(\boldsymbol{s}(t), \boldsymbol{a}(t))+\rho_c\delta_k(t)\nabla_v V(\boldsymbol{s},\boldsymbol{a})
	\\&=\boldsymbol{v}(\boldsymbol{s}(t), \boldsymbol{a}(t))+\rho_c\delta(t)\boldsymbol{\varphi}(\boldsymbol{s},\boldsymbol{a}),
	\end{split}
	\label{update_value}
	\end{equation}
	where $\rho_c$ is the critic learning rate. Finally, the critic updates the value function in \eqref{value_function} based on value of $\boldsymbol{v}(\boldsymbol{s}, \boldsymbol{a})$ in \eqref{update_value}. 	
	
		\begin{figure*}
		\centering
		\includegraphics[width=0.75\linewidth]{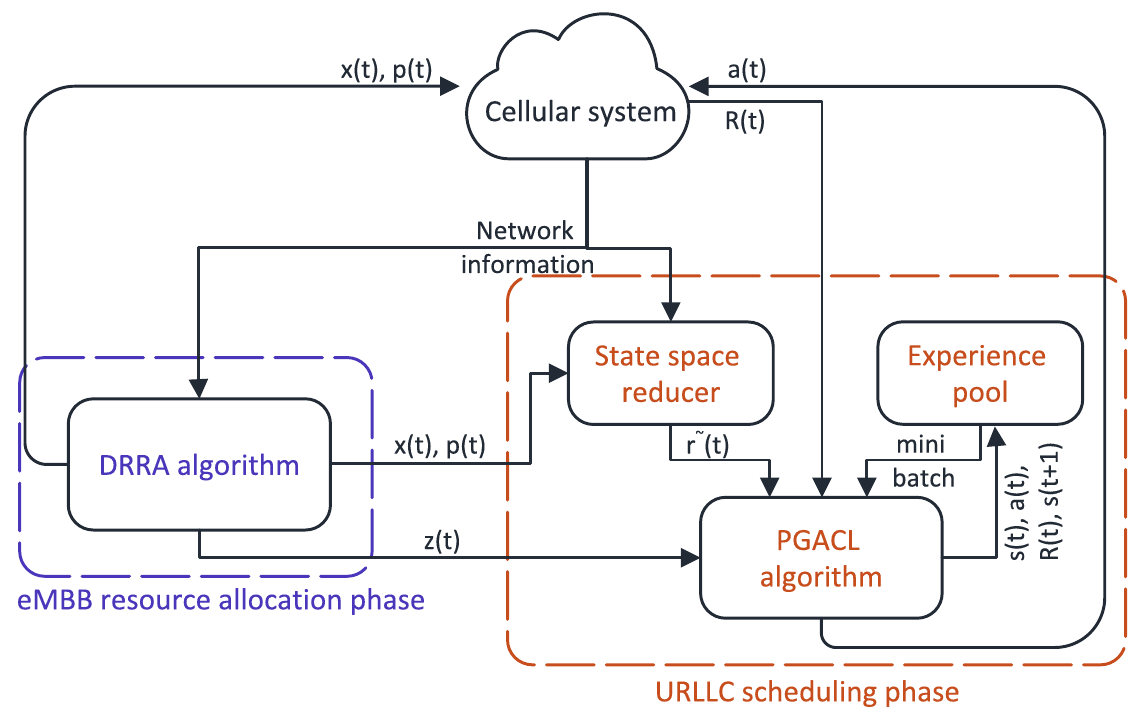}
		\caption{Block diagram of the proposed DRRA-PGACL framework.}
		\label{Block_Diagram}	
	\end{figure*}
	The block diagram of the proposed DRRA-PGACL framework is shown in Fig. \ref{Block_Diagram}. First, the gNB allocates resources to eMBB users based on the optimal results obtained by the DRRA algorithm and forwards it, in addition to the current network state, to the PGACL algorithm. The experience pool of the proposed PGACL algorithm is initialized according to the current optimal solution obtained by the DRRA algorithm. Moreover, the URLLC reliability weight $\phi$ is initialized randomly. Then, the PGACL algorithm selects an action according to the current policy. During the first $\hat{T}$ learning steps, the PGACL algorithm replaces the selected action by $\boldsymbol{z^*}(t)$ obtained by the DRRA algorithm. Next, the PGACL algorithm executes the selected action, observes the immediate reward $R(t)$ and next state $\boldsymbol{s}(t+1)$, and stores the experience tuple $\{\boldsymbol{s}(t), \boldsymbol{a}(t), R(t), \boldsymbol{s}(t+1)\}$ in the experience pool. The network is trained by sampling random tuples from the experience pool. Finally, the value of $\phi(t)$ is updated according to \eqref{urllc_weight_updating}. In the next section, we have detailed simulations to show the convergence time and performance of the proposed algorithms.  
	\section{Performance Evaluation}
		\begin{table}[t!]
		\caption{Simulation Parameters}
		\centering 
		\begin{tabular}{|l|l|}
			\hline 
			\textbf{Parameter}  & \textbf{Value} \\
			\hline
		    Cell radius & $300 m$\\
		    		    \hline
		    TTI length & $1$ ms\\
		    		    \hline
		    Time-frame length & $10$ ms\\
		    		    \hline
		    Number of sTTIs in each TTI & 7\\
		    		    \hline
		    Number of sub-carriers per RB & 12\\
		    		    \hline
		    Number of OFDM symbols in each time slot & 14\\
		    		    \hline
		    Number of OFDM symbols in each mini-slot & {2}\\
		    		    \hline
		    Sub-carrier-spacing & $15$ kHz\\
		    		    \hline
		    Total system bandwidth & $20$ MHz\\
		    		    \hline
		    URLLC packet size & $32$\\
		    		    \hline
		    URLLC traffic model & Poisson process with rate $\lambda_u$\\
		    		    \hline
		    eMBB traffic model & Full-buffered\\
		    		    \hline
		    Actor learning rate &$10^{-5}$\\
		    		    \hline
		    Critic learning rate & $10^{-3}$\\
		    \hline
		    Mini-batch size & $32$\\
		    \hline
		    \end{tabular}
		    \end{table}
	In this section, we validate the efficacy of our proposed algorithms via comprehensive experimental analysis. We assess the performance of proposed solution approach for different parameter settings. To that end, we compare our results with the following state-of-the-art schedulers: \emph{1) MAT \cite{Bairagi2019}:} one-sided matching game is used to take-over the eMBB users resources for supporting URLLC traffic, \emph{2) LMCS \cite{baseline_2}:} URLLC traffic is scheduled by dropping eMBB users with lowest modulation coding scheme (MCS), \emph{3) Sum-Rate:} puncturing strategy is adopted to maximize the average sum-rate of eMBB users, and \emph{4) Sum-Log:} wireless resources are allocated so as to maximize the sum-log of eMBB users data rate, i.e., proportional fair allocation.
	\subsection{Simulation Environment}
	We consider a wireless network, where one gNB is deployed at the center of the coverage area with a radius of $200$ m. A number of eMBB and URLLC users are distributed randomly within the coverage area. The duration of a time slot is set to $1$ ms and each time slot is further divided into $7$ equally spaced mini-slots. Each RB is composed of $12$ subcarriers with $14$ OFDM symbols and each subcarrier has a subcarrier-spacing of $15$ kHz. Thus, the bandwidth of each RB is $180$ kHz and each mini-slot consists of $2$ OFDM symbols. Moreover, the total system bandwidth is set to $20$ MHz. We consider the arrival of URLLC packets in each mini-slot follows Poisson process with rate $\lambda_u$ and the size of each packet is $32$ bytes. The complete list of simulation parameters is given in Table II.
	\subsection{Performance analysis of the DRRA algorithm}
		\begin{figure}[t!]
    \centering
    \begin{minipage}[b]{0.45\linewidth}
    \includegraphics[width=1\linewidth]{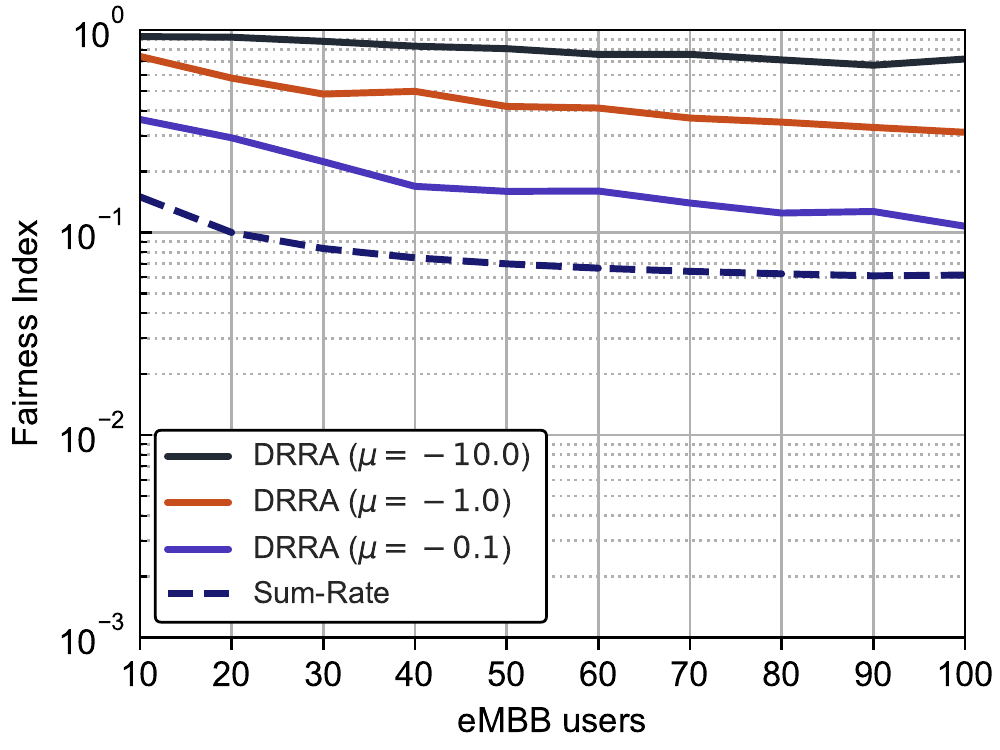}
	\caption{\small The Jain's fairness among eMBB users.}
	\label{eMBB_Fairness}
    \end{minipage}
    \hfill
    \begin{minipage}[b]{0.45\linewidth}
    \includegraphics[width=1\linewidth]{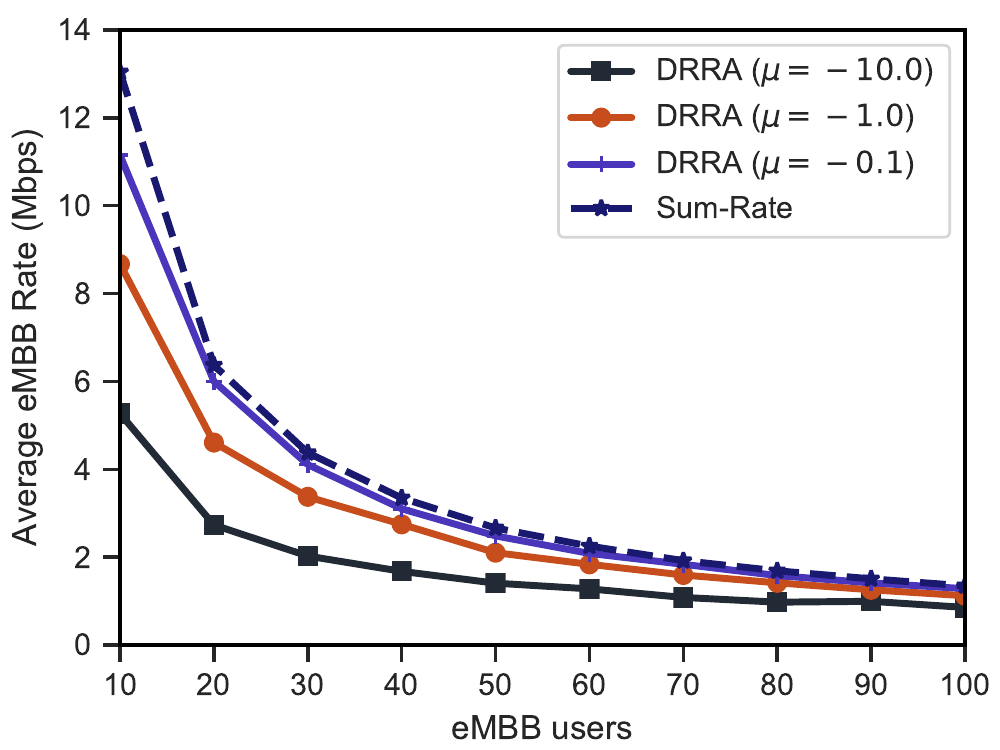}
	\caption{\small Average per user eMBB data rate.}
	\label{avg_eMBB_rate}
    \end{minipage}
    \end{figure}
	We first study the performance of the proposed DRRA algorithm, in the eMBB resource allocation phase, for different parameter configurations and compare it with the \emph{Sum-Rate} baseline, where the objective is to maximize the sum-rate of all eMBB users. Specifically, we study the fairness among eMBB users in Fig. \ref{eMBB_Fairness} for the proposed DRRA algorithm under different settings and compare it to the Sum-Rate approach. The fairness among eMBB users is calculated based on the Jain's Fairness index. As shown in Fig. \ref{eMBB_Fairness}, increasing the value of $\mu$ negatively leads to more risk-averse and hence reduces the variance of eMBB users' data rate. We can see from Fig. \ref{eMBB_Fairness} that $\mu=-10$ ensures fairness by around $90\%$. However, the fairness index is breaking down when we set the value of $\mu$ to $-0.1$ as the algorithm nears to the risk-neutral case where the algorithm maximizes the average sum data rate giving results closer to that of the \emph{Sum-Rate} approach. Furthermore, the \emph{Sum-Rate} approach gives the worst fairness as its objective is to maximize the average sum data rate only without considering its variance, i.e., it allocates more resource to users at good channel states. In Fig. \ref{avg_eMBB_rate}, we study the average per user data rate for different values of $\mu$. The \emph{Sum-Rate} approach provides the highest data rate as its objective is to maximize the average data rate without considering the QoS requirements of each eMBB user resulting in unreliable transmission. However, the proposed DRRA algorithm with lower values of $\mu$ gives lower average data rate as the algorithm gives higher priority to the variance and hence allocates more resources to the users at bad channel states ensuring more reliable transmission. Moreover, setting $\mu$ to high values gives results comparable  to the \emph{Sum-Rate} approach. 
	\begin{figure}[t!]
		\centering
		\begin{minipage}[b]{0.45\linewidth}
			\includegraphics[width=1\linewidth]{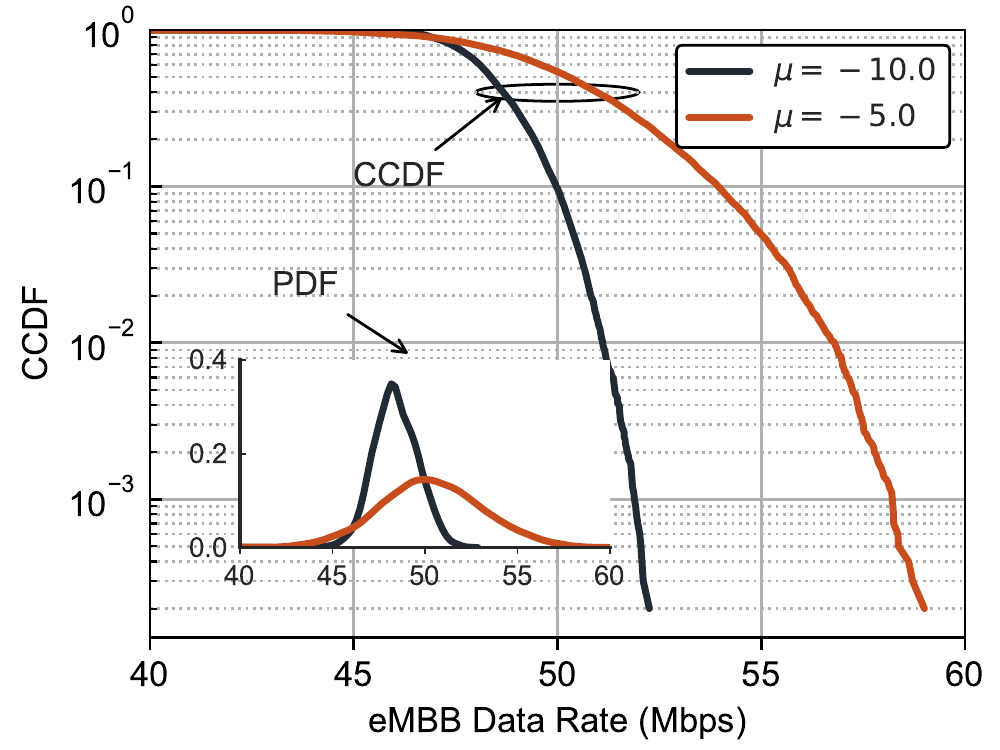}
			\caption{\small CCDF and PDF of the sum eMBB data rate.}
			\label{eMBB_rate_variance}
		\end{minipage}
		\hfill
		\begin{minipage}[b]{0.45\linewidth}
			\includegraphics[width=1\linewidth, height=2.2in]{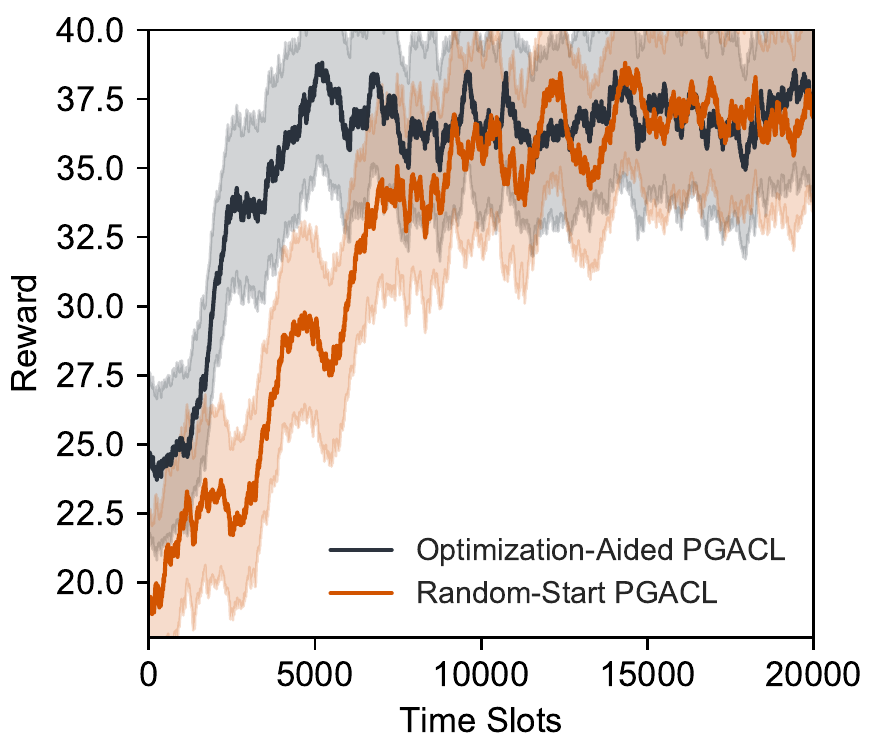}
			\caption{\small Convergence of the reward value over time slots.}
			\label{reward_convergence}
		\end{minipage}
	\end{figure}
  
	In Fig. \ref{eMBB_rate_variance}, we plot the complementary cumulative distribution function (CCDF) and the probability density function (PDF) of the eMBB data rate calculated over time for different values of $\mu$. Setting $\mu$ to higher negative values degrades the eMBB sum data rate while reducing its variance which leads to more stable and reliable eMBB transmissions over time. As shown in Fig. \ref{eMBB_rate_variance}, the average eMBB sum data rate is around $50$ Mbps and it varies from $40$ Mbps to $60$ Mbps when $\mu=-5.0$. However, setting $\mu=-10.0$ gives data rate between $45$ Mbps to $52$ Mbps resulting in a stable eMBB transmission.
	\subsection{Convergence analysis of the PGACL algorithm}
	We study the convergence of the proposed \emph{optimization-aided PGACL} algorithm, i.e., pre-trained using the results obtained by the DRRA algorithm, and compare it with the \emph{Random-Start PGACL} approach, where the PGACL algorithm is initialized with random data. Specifically, Fig. \ref{reward_convergence} shows the convergence of the reward function over time. As shown in Fig. \ref{reward_convergence}, the  algorithm incurs a worse performance at the beginning when initializing it with a random data and improves over time. On the other hand, the proposed \emph{optimization-aided PGACL} algorithm leverages the results of the DRRA algorithm for training during the first time slots enabling fast convergence and hence achieving better response to the dynamic environment.
	\subsection{URLLC reliability analysis}
	    \begin{figure}[t!]
    \centering
    \begin{minipage}[b]{0.45\linewidth}
    \includegraphics[width=1\linewidth, height=2.2in]{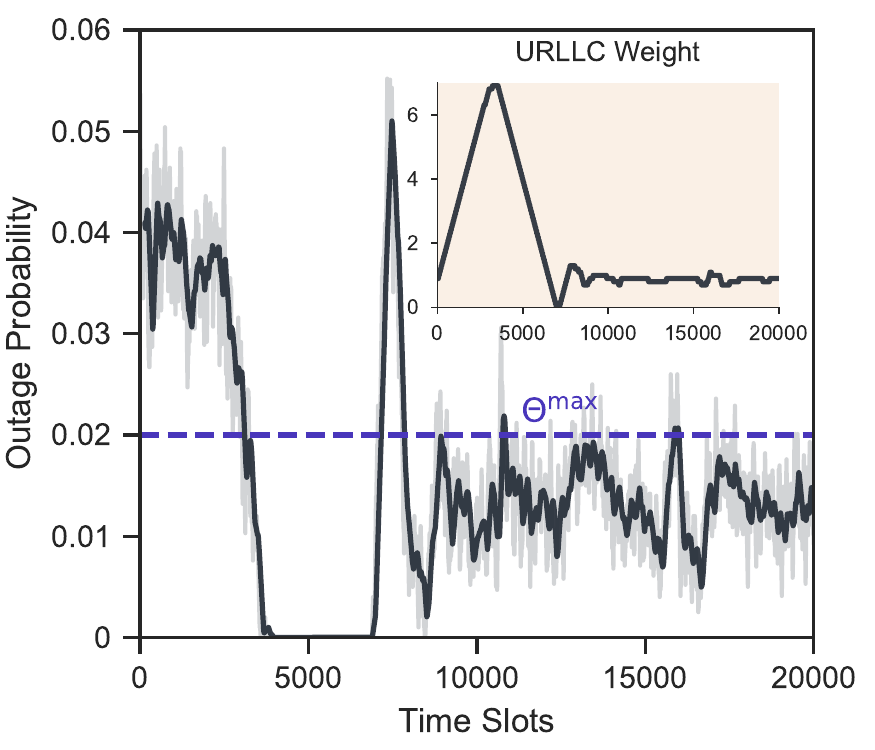}
		\caption{\small The convergence of the outage probability.}
		\label{outage_prob_convergence}
    \end{minipage}
    \hfill
    \begin{minipage}[b]{0.45\linewidth}
    \includegraphics[width=1\linewidth]{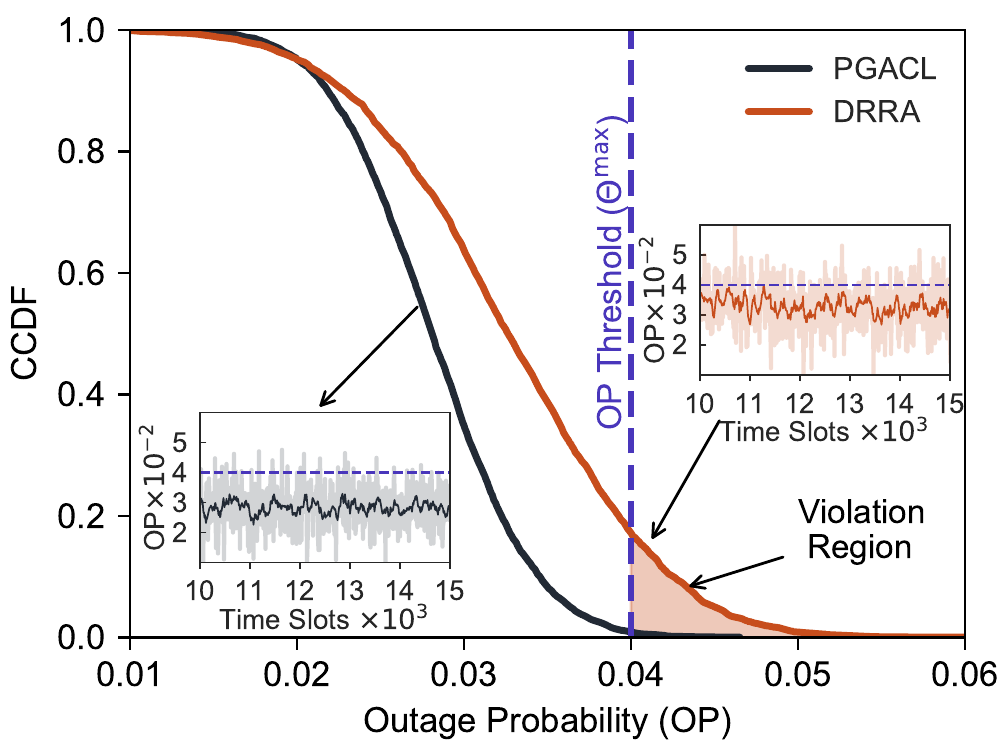}
		\caption{\small CCDF of the URLLC outage probability.}
		\label{URLLC_Reliability}
    \end{minipage}
    \end{figure}
	First, we discuss the convergence of the URLLC outage probability during the learning process in Fig. \ref{outage_prob_convergence}. It is clear that the outage probability converges to a value lower than $\Theta^*$ as the algorithm checks the reliability constraint at each time slot and then updates the value of $\phi(t)$ to ensure the URLLC reliability constraint. Moreover, the updating values of $\phi(t)$  over time slots based on the proposed updating rule in \eqref{urllc_weight_updating} is included in Fig. \ref{outage_prob_convergence}. Next, we discuss the worst case of the URLLC reliability obtained by the DRL-based PGACL algorithm and compare it with that of the optimization-based DRRA algorithm in Fig. \ref{URLLC_Reliability}. We plot the CCDF of the URLLC reliability to emphasize its tail distribution. It is shown that the DRL-based PGACL algorithm minimizes the tail-risk of the URLLC outage probability and ensures its values less than $\Theta^*$ while the optimization based DRRA algorithm fails to capture the worst case violating the URLLC reliability. The DRL-based PGACL algorithm learns the URLLC traffic and channel variations and adjusts the URLLC weight dynamically based on \eqref{urllc_weight_updating}, which leads to more reliable transmissions. However, the optimization-based DRRA algorithm fails to ensure stringent outage probability due to the applied relaxation methods to get a convex form. As shown in Fig. \ref{URLLC_Reliability}, the outage probability obtained by the DRRA algorithm may violate the reliability constraint with a violation probability around $0.18$ when setting $\Theta^*=0.04$ while the PGACL algorithm can ensure stringent reliability. 
	\subsection{Impact of URLLC traffic on eMBB reliability}
		\begin{figure*}[t]
		\centering
		\includegraphics[width=1\linewidth, height=2.0in]{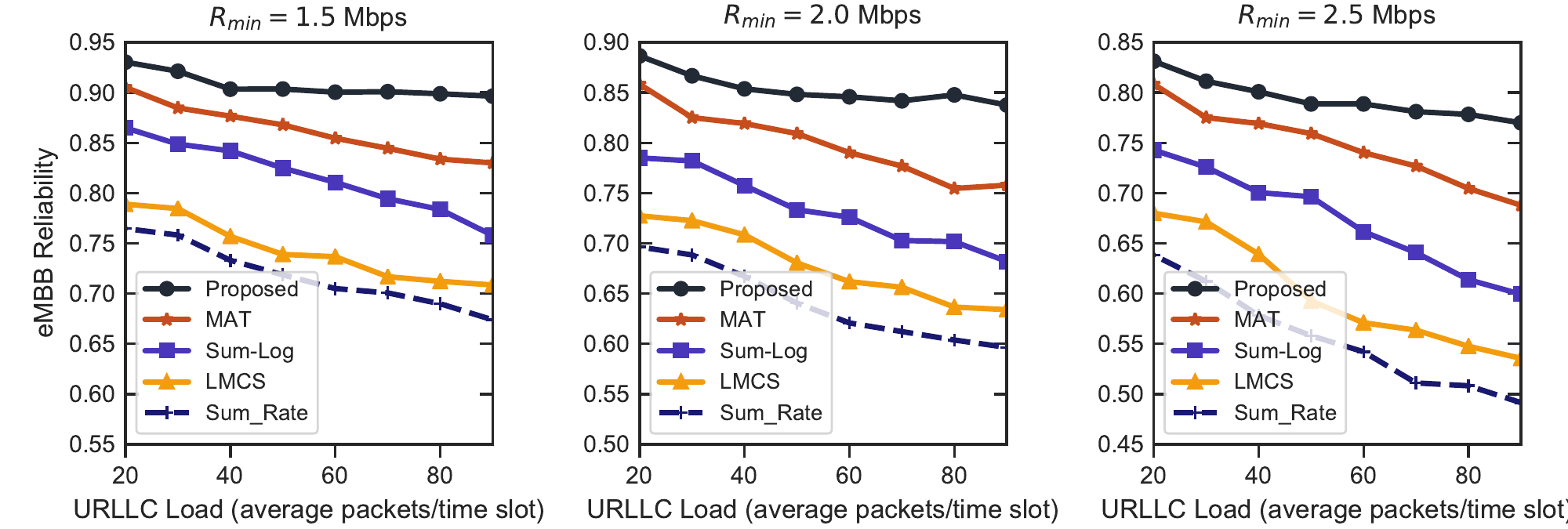}
		\caption{\small eMBB reliability for different $L$ and $R_{\textrm{min}}$.}
		\label{eMBB Reliability}
	\end{figure*}
	We study the impact of URLLC traffic on the eMBB reliability and compare the results obtained by the proposed risk-averse based approach with the \emph{MAT, LMCS, Sum-Log,} and, \emph{Sum-Rate} baselines. The eMBB reliability is calculated as the number of eMBB users with data rate higher than a target rate $R_{\textrm{min}}$ divided by the total number of eMBB users\footnote{{We quantify the reliability of eMBB users as the proportion of satisfied eMBB users for the given channel conditions and incoming URLLC traffic. Thus, this measurement translates well to capture the eMBB reliability defined by the variance part in the formulated problem \eqref{URLLC_objjective_function}.}}. Fig. \ref{eMBB Reliability} shows that the proposed algorithm guarantees higher reliability as compared to the others baselines. In the Sum-Rate approach, the algorithm tries to maximize the sum data rate of eMBB users by puncturing eMBB users with low data rate. However, protecting eMBB users having higher data rate eventually degrades the reliability of eMBB transmissions. In the \emph{LMCS} approach, URLLC traffic is scheduled on the eMBB users with low MCS. A lower code rate is used in poor channel conditions implementing low-order modulation schemes such as BPSK and QPSK, which are more robust and can tolerate higher levels of interference. Thus, we can notice from Fig. \ref{eMBB Reliability} that \emph{LMCS} gives results close to \emph{Sum-Rate}. However, in the \emph{MAT} approach, a one-sided matching game is applied to schedule URLLC traffic on eMBB users given the objective to maximize the average eMBB data rate while protecting the QoS requirements of eMBB users. Thus, we can notice that \emph{MAT} gives better eMBB reliability compared to \emph{LMCS} scheduler. Furthermore, the Sum-Log approach distributes URLLC traffic equally among all eMBB users resulting in moderated reliability. However, the proposed risk-averse algorithm considers the variance of eMBB users and allows to protect users at bad channel states by puncturing those at better states, which further enhances eMBB reliability. We can also see that eMBB reliability decreases when increasing $R_{\textrm{min}}$. 
				\begin{figure}[t]
		\centering
		\includegraphics[width=0.45\linewidth]{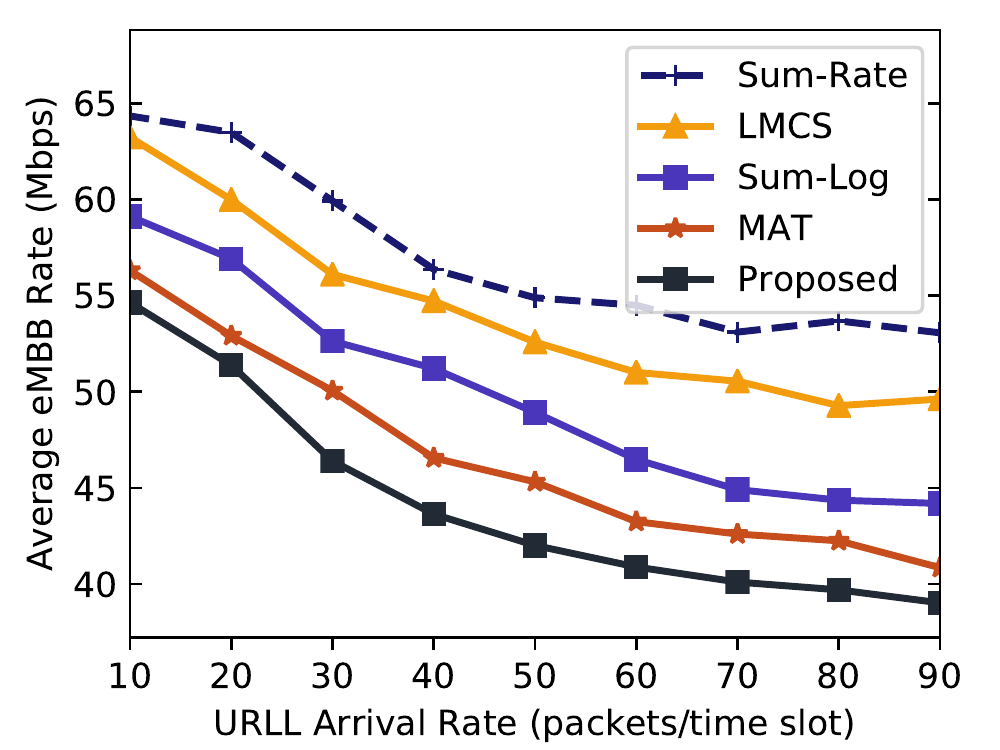}
		\caption{\small Average eMBB data rate for different values of $L$.}
		\label{eMBB_sum_rate} 
	\end{figure}
	
	As shown in Fig. \ref{eMBB Reliability}, the proposed approach keeps the eMBB reliability higher than $90\%$ at $R_{\textrm{min}}=1.5$ Mbps while the Sum-Rate fails to maintain an acceptable reliability, which breaks down to lower than $75\%$. Moreover, the proposed approach provides reliability higher than $80\%$ when increasing $R_{\textrm{min}}$ to $2.5$ Mbps while the reliability obtained by the Sum-Rate breaks down to lower than $60\%$. Furthermore, it is observed that an increase in the URLLC traffic decreases the eMBB reliability as we need to puncture more resources from eMBB users.
	\subsection{Impact of URLLC traffic on eMBB data rate}
	Finally, we discuss the impact of URLLC traffic on the average eMBB data rate. In doing so, we plot the average eMBB data rate for different URLLC traffic loads and compare the results obtained by the proposed algorithm with other baselines. Fig. \ref{eMBB_sum_rate} shows that increasing URLLC traffic degrades the eMBB data rate. This is because the gNB prioritizes URLLC traffic over eMBB traffic and allocates more resources to satisfy URLLC reliability requirements. Moreover, the \emph{Sum-Rate} approach provides higher average data rate compared to the other approaches as its objective is to maximize the linear summation of eMBB data rate only without considering the eMBB reliability. Moreover, the \emph{LMCS} assigns higher puncturing weight to the eMBB users with lower MCS, and thus resulting in higher average data rate compared to the proposed approach. However, markedly different from these state-of-the-art approaches, the proposed algorithm considers both the average eMBB rate and its variance, and hence achieves a balance between data rate and reliability, as shown in Fig. \ref{eMBB Reliability} and Fig. \ref{eMBB_sum_rate}.
	
	As shown in Fig. \ref{eMBB_sum_rate}, the \emph{Sum-Rate} approach provides an average sum eMBB data rate of $64$ Mbps when the average URLLC load is $10$ (packets/time slot) and decreases to $55$ Mbps when increasing the average URLLC load to $90$ packets/time slot. However, the average sum data rate obtained by the proposed approach varies from $55$ Mbps to $40$ Mbps when increasing the average URLLC load from $10$ to $90$ packets/time slot.
	\section{Conclusion}
	In this paper, we have studied the coexistence problem of eMBB and URLLC services in 5G networks. We have formulated a risk-sensitive based formulation to improve the reliability of both eMBB and URLLC services. In particular, we have proposed an optimization-aided DRL-based approach that combines the advantages of optimization and learning methods for solving the resource allocation problem. Specifically, resources are allocated to eMBB users at the eMBB resource allocation phase. Moreover, the eMBB resource allocation phase is leveraged to schedule the URLLC traffic at the initial stage and its results are used to learn the DRL-based algorithm to enhance its convergence. In the URLLC scheduling phase, we have proposed a DRL-based learning algorithm in the actor-critic architecture to distribute the URLLC traffic across the ongoing eMBB transmission. Through extensive simulations, we have verified that the proposed algorithms can satisfy the stringent requirements of URLLC while protecting the eMBB reliability.   
\appendices
  \section{Convergence Analysis}
  \label{appendixA}
		The proof is based on \cite{xu2013block} which shows the global convergence condition and the asymptotic convergence rate by using the assumption of the Kurdyka-Lojasiewicz property. \par
        
        The general block multi-convex function is in form of
		\begin{equation}
		    \min\limits_{x\in\mathcal{X}} G(\boldsymbol{x}_1,\dots,\boldsymbol{x}_s):=g(\boldsymbol{x}_1,\dots, \boldsymbol{x}_s)+\sum\limits_{k=1}^{s}h_k(\boldsymbol{x}_k)
		\end{equation}
		where variable $\boldsymbol{x}$ is decomposed into $s$ blocks $\boldsymbol{x}_1,..., \boldsymbol{x}_s$, $g$ is assumed to be a differentiable and block multi-convex function, $h_k$ is the convex function for each block, and a set $\mathcal{X}$ is a block multi-convex set. In specific, the function $g$ is a convex function and the set $\mathcal{X}$ is a convex set of each block $\boldsymbol{x}_k$ while other blocks are fixed. Note that the joint constraint among blocks can be modeled in $\mathcal{X}$, and individual constraints for each block can be modeled as the indicator function $h_k$ for each block $\boldsymbol{x}_k$. The convergence analysis requires the following assumptions
		\begin{assumption}
		$G$ is continuous in $\textrm{dom}(G)$ and $\inf\limits_{x\in dom(G)}G(x)>-\infty$.
		\end{assumption}
		\begin{assumption}
		A function $\psi(\boldsymbol{x})$ satisfies the Kurdyka-Lojasiewicz (KL) property at point $\bar{\boldsymbol{x}}\in\textrm{dom}(\partial \psi)$ if there exists $\theta\in[0;1)$ such that 
		\begin{equation}
		    \frac{|\psi(\boldsymbol{x}-\psi(\bar{\boldsymbol{x}}))|}{\textrm{dist}(\boldsymbol{0},\partial\psi(\boldsymbol{x}))}
		\end{equation}
		is bounded around $\bar{\boldsymbol{x}}$, where $\textrm{dom}(\partial\psi)=\{\boldsymbol{x}:\partial\psi(\boldsymbol{x})\neq 0\}$ and $\textrm{dist}(\boldsymbol{0},\partial\psi(\boldsymbol{x}))= \min\{\parallel\boldsymbol{y}\parallel: \boldsymbol{y}\in \partial \psi(\boldsymbol{x})\}$.  
		\end{assumption}
		\begin{assumption}
		The initial point $\boldsymbol{x}_0$ is sufficiently close to the critical point $\bar{\boldsymbol{x}}$, and the value of the function $G(\boldsymbol{x}_n)$ for each iteration $n$ is always greater than the value function at $\bar{\boldsymbol{x}}$ (i.e., $G(\boldsymbol{x}_n)>F(\bar{\boldsymbol{x}}), \; \forall n\geq 0$).
		\end{assumption}
		Therefore, with regard to the Assumptions 1-3, the sequence ${\boldsymbol{x}^n}$ converges globally to the closest critical point $\bar{\boldsymbol{x}}$ (or stationary point) [Theorem 2.8 in \cite{xu2013block}] by proving the bounded of $\sum_{n=N}^{\infty}||x^n-x^{n+1}||$. According to Theorem 2.9 in \cite{xu2013block}, the asymptotic convergence rate depends on the parameter $\theta$.  When $\theta=2/3$, we obtain the sub-linear convergence rate $\mathcal{O}(\frac{1}{n})$.	

	\bibliographystyle{IEEEtran}
	\bibliography{References}

\end{document}